\documentclass[a4paper,11pt]{article}

\usepackage{jheppub} 


\usepackage{amssymb}
\usepackage{amsmath}
\usepackage{amsthm}
\usepackage{mathrsfs}
\usepackage[numbers]{natbib}
\usepackage{mathtools}
\usepackage{graphicx}
\usepackage{float}
\usepackage{subfigure}
\usepackage{booktabs}
\usepackage{multirow}
\usepackage[colorlinks=true]{hyperref}
\usepackage{physics}

%



\def\a{\alpha}
\def\b{\beta}
\def\sil{\tilde{S}_{I_l}}
\def\sjr{\tilde{S}_{J_r}}
\def\ts{\tilde{S}}
\def\sigx{\sigma(x)}

\def\mxx{m(x,x^\prime)}
\def\xx{(x,x^\prime)}
\def\fxx{(f(x),f(x^\prime))}
\def\ix{x^{(i)}}
\def\fix{f(x^{(i)})}
\def\p{\partial}
\def\cplm{\complement}

\def\tm{\tilde{m}}
\def\tS{\tilde{S}}
\def\Bp{\mathbb{B}^\prime}
\def\xp{x^\prime}

\newtheorem{theorem}{Theorem}
\newtheorem{definition}{Definition}
\newtheorem{lemma}{Lemma}

\title{\boldmath Improved proof-by-contraction method and  relative homologous entropy inequalities}


\author[a,b,c]{Nan Li,}
\author[a,b,c]{Chuan-Shi Dong,}
\author[d]{Dong-Hui Du}
\author[a,b,c,e,1]{and Fu-Wen Shu \note{Corresponding author.}}
%
%
\affiliation[a]{Department of Physics, Nanchang University, Nanchang, 330031, China}
\affiliation[b]{Center for Relativistic Astrophysics and High Energy Physics, Nanchang University, Nanchang, 330031, China}
\affiliation[c]{Institute for Advanced Study, Nanchang University, Nanchang, 330031, China}
\affiliation[d]{School of Physics and Astronomy, Sun Yat-sen University, Guangzhou 510275, China}
\affiliation[e]{Center for Gravitation and Cosmology, Yangzhou University, Yangzhou, China}
\emailAdd{nanli\_cn2000@163.com}
\emailAdd{dong\_chuanshi@163.com}
\emailAdd{donghuiduchn@gmail.com}
\emailAdd{shufuwen@ncu.edu.cn}


\abstract{The celebrated holographic entanglement entropy triggered investigations on the
connections between quantum information theory and quantum gravity. An important
achievement is that we have gained more insights into the quantum states. It allows us to
diagnose whether a given quantum state is a holographic state, a state whose bulk dual
admits semiclassical geometrical description. The effective tool of this kind of diagnosis is
holographic entropy cone (HEC), an entropy space bounded by holographic entropy
inequalities allowed by the theory. To fix the HEC and to prove a given holographic
entropy inequality, a proof-by-contraction technique has been developed. This method
heavily depends on a contraction map $f$, which is very difficult to construct especially for
more-region ($n\geq 4$) cases. In this work, we develop a general and effective rule to rule
out most of the cases such that $f$ can be obtained in a relatively simple way. In addition, we
extend the whole framework to relative homologous entropy, a generalization of holographic
entanglement entropy that is suitable for characterizing the entanglement of mixed states.}

\begin{document}
\maketitle
\flushbottom

\section{Introduction}
Recent progress in gravity/gauge duality has shed new light on our understanding of
quantum gravity. For instance, a significant insight into the connections between quantum
information theory and Anti-de Sitter/Conformal field theory (AdS/CFT) correspondence
\cite{AdS/CFT1, AdS/CFT2, AdS/CFT3}  was gained after the discovery of the holographic
entanglement entropy (HEE), which states that area of a codimension-one homologous surface
(the Ryu-Takayanagi (RT) surface) on a time slice is identified with the entanglement entropy in CFTs
\cite{HEE1, HEE2}
\begin{equation}\label{RTformula}
S(A)=\frac{\text{area}[m(A)]}{4G_{N}} ,
\end{equation}
where $m(A)$ is a spacelike minimal hypersurface in the bulk spacetime, anchored to the
asymptotic boundary such that $\partial m(A) = \partial A$. This remarkable formula
provides direct evidence of the potential connections between quantum information
theory and quantum gravity through holography.  As a consequence, over the past few
years a lot of efforts have been made to understand how this formula works in the
holography community.

One significant achievement of these efforts is that it allows us to gain further insights into
the quantum states. In particular, it provides an exact diagnosis on judging whether a given
quantum field theory state has a holographic duals \cite{Heemskerk:2009pn}. We usually
call those states whose bulk duals admit semiclassical geometric descriptions the holographic
states. These are a special sub-class of quantum states which satisfy certain constraints.
A significant fact is that the RT formula \eqref{RTformula} proved to be able to identify some
of these constraints in the form of inequalities that the entropies of boundary subregions
need to satisfy. In \cite{Bao:2015bfa}, the authors found that, by arranging subsystem
entropies of general mixed states into entropy vectors, entropy inequalities can be regarded
as a space bounded by the allowed entropy vectors. It turns out that this is a convex space
which has a configuration of a polyhedral cone and therefore is called the holographic
entropy cone (HEC).  HEC inequalities play a central role in understanding properties of
holographic states. Therefore, what is the general structure of HEC and how to prove HEC
inequalities become a key to the whole story. Over the last few years many progresses have
been made in this direction, see for example the incomplete list \cite{HEC1, HEC2, HEC3,
HEC4, HEC5, HEC6}.  Among them, a proof technique for holographic entropy inequalities
called proof-by-contraction was developed in \cite{Bao:2015bfa}.  It was latter generalized
to  more general cases in \cite{Rota:2017ubr,Avis:2021xnz}. Very recently, a geometric
formulation of this method has been proposed in \cite{Bao:2015bfa,Akers:2021lms}. Given a general
holographic entropy inequality
\begin{equation}
    \sum_{l=1}^{L}\alpha_lS_{I_l}\geq \sum_{r=1}^{R}\beta_rS_{J_r}
    \label{rtentropy}
\end{equation}
where $S_I$ is RT entropy, $\alpha_l,\beta_r$ are positive numbers, $L$ and $R$ are positive
integers, representing the numbers of entropy terms appearing on each sides.
The basic idea of  the proof-by-contraction is to construct a reasonable rule such that
one can compare terms in left-hand side (LHS) of \eqref{rtentropy} to the ones of right-hand
side (RHS) in full generality. It was found in \cite{Bao:2015bfa} that this can be achieved by
finding a contraction map $f$: $\{0, 1\}^L\rightarrow \{0, 1\}^R$, where
$\{0,1\}^L$ ( $\{0,1\}^R$) denotes a length-L (length-R) bitstrings formed by $0$ or $1$.
One can introduce a weighted Hamming distance $d_\omega$ which is defined as
\begin{equation}\label{domega}
    d_\omega(y,y^{\prime})\coloneqq\sum_{i=1}^m \omega_i\abs*{y_i-y^{\prime}_i},
\end{equation}
where $\omega$ is a weight vector, and $y, y' \in \{0, 1\}^m$ are a pair of bitstrings. We
can therefore prove that inequality \eqref{rtentropy} holds if there exists a contraction map
$f$ for the weighted Hamming distances $d_\alpha$ and $d_\beta$ satisfying
$d_\a(x,x^{\prime})\geq d_\b(f(x),f(x^{\prime}))$ for every $x, x' \in \{0, 1\}^L$
\cite{Akers:2021lms}.

One caveat, however, is that the difficulty of finding $f$ increases exponentially with the
number of the terms in \eqref{rtentropy}. If we use an exhaustive method
to find $f(x)$ for all $x$, the number of all possible situations is of order $(2^R)^{2^L}$ that is
tremendous. We need to develop a general and effective rule to rule out most of the
cases such that $f$ can be obtained in a relatively simple way. In this work, we develop
a set of general rules to get $f$ effectively. Several examples are given to show the
validity of our proposal.

In addition,  our method is valid not only for RT entropy, but also for relative homologous
entropy (RHE) \cite{Headrick:2017ucz}, a well-defined generalization of holographic
entanglement entropy that is suitable for characterizing the entanglement of mixed states.
As an important example which has been studied widely in the past few years, the
entanglement of purification (EoP) is a sub-class of RHE. See \cite{EoP,Takayanagi:2017knl,
Nguyen:2017yqw,EW1, EW2, EW3,HEoP1, HEoP2, HEoP3, HEoP4, HEoP5, HEoP6, HEoP7,
HEoP8, HEoP9, HEoP10, HEoP11, HEoP12, HEoP13, HEoP14, HEoP15, HEoP16, HEoP17,
HEoP18, HEoP19, HEoP20, HEoP21, HEoP22,HEoP23, HEoP24, HEoP25, HEoP26} for more
recent progress on EoP and its holographic duality. Our analysis is based on the
max flow-min cut (MFMC) theorem, which states that the maximum flux out of a boundary
region $A$, optimized overall divergence free norm-bounded vector fields in the bulk, is
proportional to the area of the minimal homologous surface $m(A)$  \cite{FH, BT2, BT3,
BT4, BT5,Agon:2021tia}.

This paper is organized as follows: In section 2, we give a brief review on MFMC theorem and
relative homology proposed in \cite{FH, Headrick:2017ucz}. In section 3, we first give an
introduction to the proof-by-contraction method, then we generalize this method to relative
homology and an improved version is developed. In section 4, we show how to devise a set
of general rules to find the contraction map $f$. In section 5, several examples are given to
show our method is valid.  A concluding remark is given in the last section.

\section{Brief review on MFMC theorem and relative homology}
In this section we would like to give a brief review on the Riemannian MFMC theorem and
relative homology in \cite{FH, Headrick:2017ucz}. First, let us begin with a compact oriented
manifold $M$ with a conformal boundary $\p M$. And define a ``flow" on $M$, which is a
vector field obeying
\begin{equation}
\nabla_\mu v^\mu=0,\qquad \abs{v}\leq 1.
\end{equation}
We set $4G_{N}=1$ for convenience. Considering a subregion $A$ on boundary $\p M$,
the MFMC theorem says that there exists a max flow $v^\mu$ satisfying \cite{FH}
\begin{equation}\label{mcmf}
    \sup_{v^\mu}\int_{A} \sqrt{h} n_\mu v^\mu=\inf_{m\sim A} \text{Area}(m) \equiv S(A),
\end{equation}
where $h$ is the determinant of the induced metric on $A$, $n_\mu$ is the normal vector.
And $m$ is a bulk surface homologous to $A$ with condition $\p A = \p m$
(denote $m$ is homologous to $A$ as $m\sim A$).
Where we have admitted the RT formula \cite{HEE1} in the second equal sign.

More generally, one can loosen the boundary condition on the bulk surface $m$
by introducing the concept of relative homology \cite{Headrick:2017ucz}.
Assume there is a surface $R$, which could be either on a boundary or in bulk.
And $m^{\prime}$ is the surface homologous to $A$, which is allowed to include part of $R$.
Then we can define a surface $m$ homologous to $A$ relative to $R$ as
\begin{equation}
m\coloneqq -(m^{\prime} \setminus R)
\end{equation}
excluding the part on $R$, which is denoted as $m\sim A \text{ rel } R$. Where the minus
sign means that $m$ takes different orientation relative to $m^{\prime}$.
See the Fig.~\ref{RH_simple}.

\begin{figure}[H]
    \centering
    \subfigure[Relative homology for one system]{
    \includegraphics[width=0.4\textwidth]{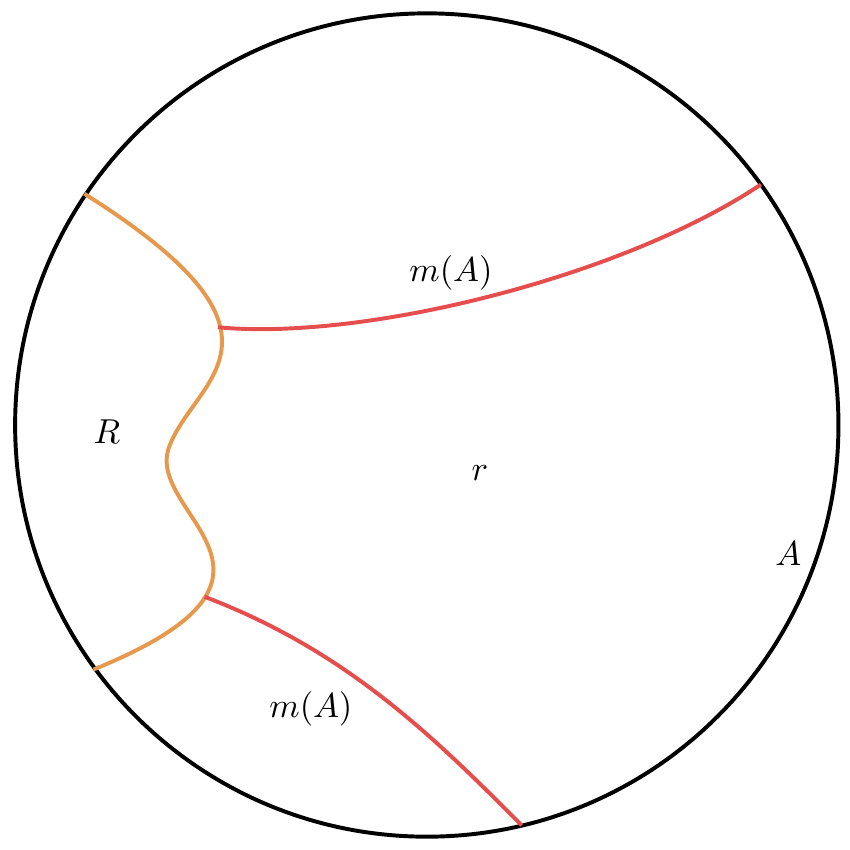}
    \label{RH_simple}}
    \subfigure[Relative homology for multiple systems]{
        \includegraphics[width=0.4\textwidth]{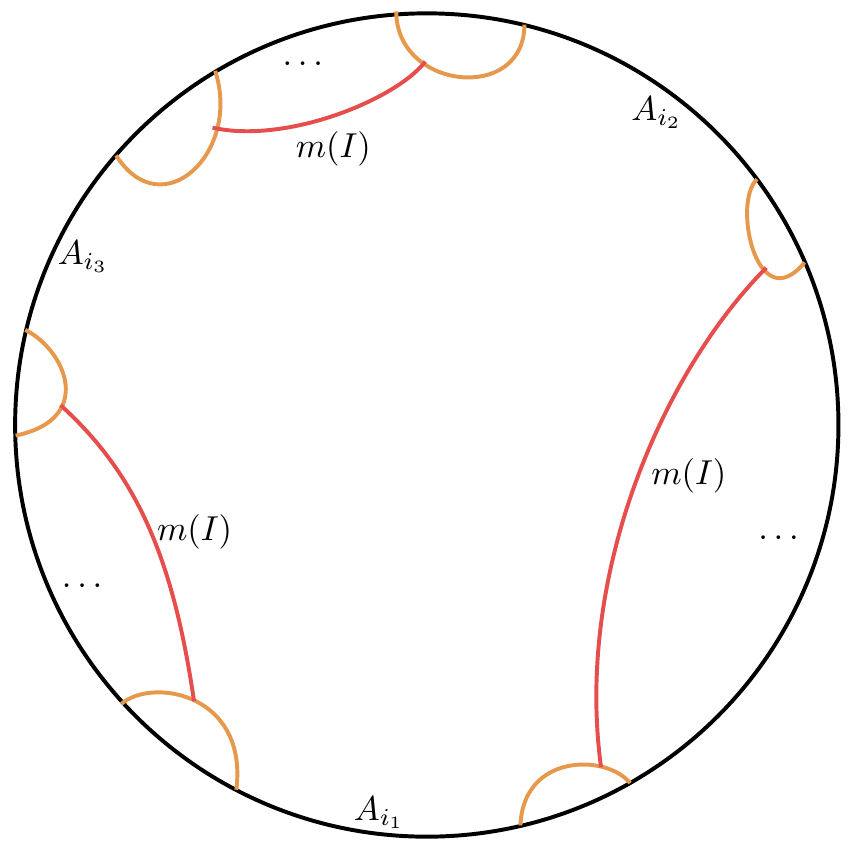}
        \label{RH_multiple}}
    \caption{(a) Relative homology for single boundary region $A$: the orange line
    represents relative surface $R$, the red lines $m(A)$ are homologous to $A$
    relative to $R$, and $r$ is the bulk region enclosed by $A, m(A)$ and $R$.
    (b) Relative homology for multiple boundary regions: orange lines still
    represent $R$, red lines $m(I)$ are homologous to $A_I$ relative to $R$, where
    $I=\{i_1,i_2,i_3\}$. We eliminate the bulk part that enclosed by $R$ and $\p M$,
    then we get $M^{\prime}$, i.e., the bulk region enclosed by $A$ and $R$.}
    \label{fig RHE}
\end{figure}

Let us now turn to the MFMC theorem associated with relative homology. This is a direct
generalization of the original MFMC theorem, that is \cite{Headrick:2017ucz}
\begin{equation}\label{gMFMC}
    \sup_{\substack{v^\mu:\\n_\mu v^\mu |_R=0}}
   \int_{A} \sqrt{h} n_\mu v^\mu=\inf_{\substack{m\sim A\\ \text{rel }R}} \ \text{Area}(m),
\end{equation}
where $h$ is the determinant of the induced metric on $A$, $n_\mu$ is the normal vector
on $A\cup R$,
and we impose a Neumann condition (no-flux condition) on $R$ for the flow.
Note that the MFMC theorem shows us the equivalency between ``max-flow" and ``min-cut".

Following RT formula, define a so-called relative homologous entropy (RHE) \cite{Headrick:2017ucz} as
\begin{equation}\label{RH entropy}
    \tilde{S}_A \coloneqq  \inf_{\substack{m\sim A\\ \text{rel }R}} \text{Area}(m)
    =\sup_{\substack{v^\mu:\\n_\mu v^\mu |_R=0}}
    \int_{A} \sqrt{h} n_\mu v^\mu.
\end{equation}

The above definition can be directly generalized to multiple boundary regions. Consider
$n$ non-overlapping boundary regions $A_i$ $(i\in[n]\coloneqq\{1,\dots,n\})$ on $\p M$.
We can choose a particular surface $R$ on $M$, whose boundary satisfying
$\p R=\bigcup _{i \in [n]}\p A_i$. So that $R$ and $A_i$ $(i\in[n])$ enclose an open
bulk region $M^{\prime}$, as shown in Fig.~\ref{RH_multiple}.

For boundary region
$A_{I}\equiv\bigcup_{i\in I}A_i$, where $I$ is a nonempty subset of $[n]$, we have
\begin{equation}\label{MRHE}
   \tilde{S}_I \coloneqq \inf_{\substack{m(I)\sim A_{I}\\ \text{rel }R}} \text{Area}(m(I))=\text{Area}[\tm(I)],
\end{equation}
where bulk surface $m(I)$ is homologous to $A_{I}\equiv\bigcup_{i\in I}A_i$ relative to $R$,
and $\tilde{m}(I)$ is defined as the minimal relative homologous (RH) surface of $A_I$.

For the following discussion,
we should mention that the status of $M^{\prime}$ in the RHE
inequality is the same as the status of $M$ in the HEE inequality.
The Neumann boundary condition $n_\mu v^\mu=0$
defined on $R$ implies that the flow (or bit thread) can not be allowed to get through $R$.
Thus a max flow emerging from $A_I$ must get through the minimum RH surface $\tilde{m}(I)$
and eventually arrive at $A_{[n]\backslash I}$, similarly for $A_{[n]\backslash I}$. So $\tilde{m}(I)$
and $\tilde{m}([n]\backslash I)$ are identical, it implies that RHE and HEE are extremely similar from
the inequality perspective.

And note that when $R$ is chosen to be empty set , RHE will become HEE \cite{FH, Headrick:2017ucz}.
Or we can choose $R$ to be min RT surface of some $A_I$, then RHE can reproduce holographic
entanglement wedge cross section proposal \cite{DCS, HHea}.
While in the following, we will study the general inequality properties of RHE
\eqref{MRHE} from the ``cut" side on a geometric perspective.


\section{Improved proof-by-contraction method}
\subsection{Brief review on proof-by-contraction method}
The RT formula reveals a deep connection between the bulk geometry and boundary
entanglement structures, which promotes us to study the inequalities of entanglement
entropy from a geometrical perspective.

A direct way is the following: if we can use some minimal RH surface of some union region
$A_{I_l}(l=1,2,\cdots,L)$ to get new RH surface (usually not minimal) of other union region
$A_{J_r}(r=1,2,\cdots,R)$, then we can have inequalities about $S_{I_l}$ and $S_{J_r}$.
See more specific descriptions in the following.

Minimum homologous surface of union region appearing on the left of inequality would be
cut into fragments by themselves since they may cross each other. Proof-by-contraction method is a way
to paste the surface fragments cut by themselves into homologous surface (usually not minimum) of union region
appearing on the right of inequality. Contraction map $f$ that we will introduce soon is the
key to how surface fragments are pasted.
The whole process can be simply written as cut-$f$-paste or just cut-paste.

Generally an inequality for RHE can be written as \cite{Akers:2021lms}
\begin{equation}
    \sum_{l=1}^{L}\alpha_l\tilde{S}_{I_l}\geq \sum_{r=1}^{R}\beta_r\tilde{S}_{J_r},
    \label{entropy inequality}
\end{equation}
where $\alpha_l,\beta_r$ are positive numbers
(usually they are rational number,
so we can multiply them by a large enough integer to get integer),
$L$ and $R$ are positive integers, representing the numbers of RHE terms appearing on each side.
$I_l,J_r \subseteq [n]$ and $\forall\ l\neq l^\prime, r\neq r^\prime$,
we have $I_l\neq I_{l^\prime},J_r\neq J_{r^\prime},I_l\neq J_{r}$. The only difference between
\eqref{rtentropy} and \eqref{entropy inequality} is that we have replaced the RT entropy
$S$ in \eqref{rtentropy} by the RHE $\tilde{S}$.

To construct the inequality, we need to cut the bulk $M^{\prime}$ by using
the minimum RH surface $\tm(I_l)$ of union regions $A_{I_l}$. So
these boundaries of bulk pieces are the fragments of  min RH surface.
The surface fragments can also be used to reconstruct the $\tilde{S}_{I_l}$ by pasting them back.

Then we try to paste these surface fragments to get RH surface of union regions
$J_r$. If we have inequality as follows,
\begin{equation}
    \sum_{l=1}^L\a _l \text{Area}[\tm(I_l)]\geq\sum_{r=1}^R\b _r \text{Area}[m^{\prime} (J_r)],
    \label{surface inequality}
\end{equation}
where $m^{\prime} (J_r)$ is the RH surface (usually not minimal) and we will prove the general inequality immediately. In what follows we will show how to find RH surfaces of $J_r$ for all $r=1,\cdots,R$ that satisfy inequality~\eqref{surface inequality}.


\subsection{Cut by left min RH surface}
To proceed, let us define an open bulk subregion $\sigma_I$ on $M^{\prime}$, called RH bulk region, which satisfies $\partial \sigma_I=\tm(I) \cup A_I$. We cut $M^{\prime}$ by $\tm(I_l)$ for all $l=1,\dots,L$. These $\tm(I_l)$ will cut $M^{\prime}$ into $2^L$ pieces (possibly including some
empty pieces). Then we encode these bulk pieces with length-$L$ bitstrings $x\in \{0,1\}^L$
\begin{equation}
    \sigma(x)\coloneqq\bigcap_{l=1}^{L}\sigma_{I_l}^{x_l},\qquad
    \sigma_I^b=\left\{\begin{matrix}
        \sigma_I & \text{if  } b=1,\\
        \sigma_I^{\cplm} &\text{if  }b=0,
    \end{matrix}\right.
    \label{bulk pieces def}
\end{equation}
where $x_l$ is the $l^{th}$ component of $x$ and $\sigma_I^{\cplm}$ is
the complement of $\sigma_I$ defined as $M^{\prime}\setminus\sigma_I$.
$\sigma(x)$ may be empty for some $x$.
It is easy to get $\forall x\neq \xp,\sigma(x)\cap \sigma(\xp)=\emptyset$ by
definition \eqref{bulk pieces def}.
The surface fragment can be written as
\begin{equation}
    \mxx\coloneqq\overline{\sigma(x)}\bigcap_{\text{codim-1}}
    \overline{\sigma(\xp)},
    \label{surface fragment def}
\end{equation}
where $\overline{\sigma(x)}$ represents the closure of the corresponding region $\sigma(x)$,
and  codim-1 means that we will calculate the area by using the integral measure of $\mxx$
in codimension-1 manifold. If dimension of $\mxx$ is less than codimension-1,
then the area of $\mxx$ is zero. We also need to require $x\neq \xp$,
otherwise $m(x,x)$ is just $\overline{\sigma(x)}$ whose dimension is codimension-0 and
the area of $m(x,x)$ will go to infinity. As shown in Fig.~\ref{SSA_cut},
the shaded bulk region corresponding to $\sigma_{I_1}$. And the $\sigma(x)$ are denoted by the
bitstrings $x$ in the figure, such as $"00"$ represents the open bulk region
$\sigma(00)$ and so on.

\begin{figure}[H]
    \centering
    \includegraphics[width=0.4\textwidth]{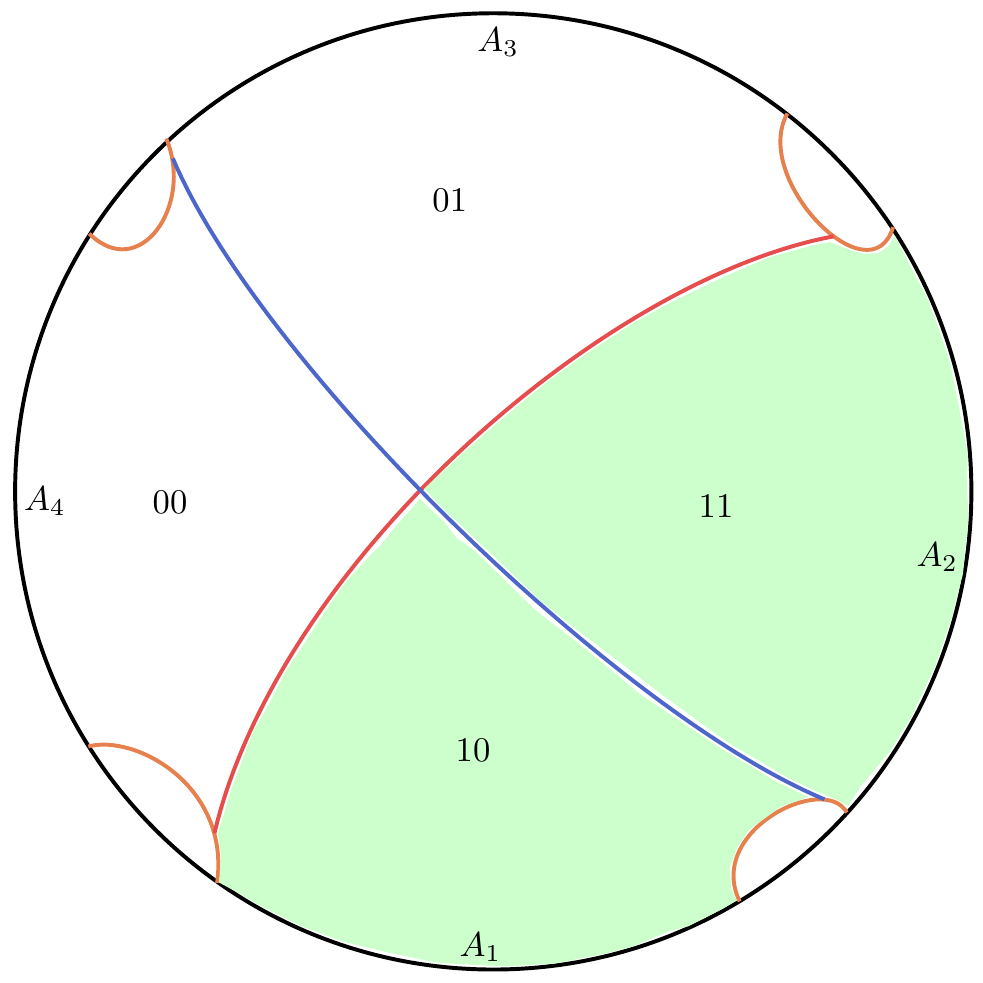}
    \caption{  Consider the SSA, i.e. $S_{A_1A_2}+S_{A_2A_3}\geq S_{A_2}+S_{A_1A_2A_3}$.
    The orange, red and blue lines represent the relative regions $R$, $\tm(I_1)$ and $\tm(I_2)$ respectively, where
    $I_1=\{1,2\}, I_2=\{2,3\}$. Where the shaded region denotes $\sigma_{I_1}$.  }
    \label{SSA_cut}
\end{figure}

By using these bulk pieces and surface fragments, we can reconstruct $\sigma_{I_l}$ and min RH surface
\begin{equation}
    \sigma_{I_l}=\bigcup_{x:x_l=1}\sigma(x).
    \label{sigil}
\end{equation}
This can be achieved directly from definition~\eqref{bulk pieces def}.
In addition, we can also get $\tm(I_l)$ as
\begin{align*}
    \tm(I_l)&\coloneqq\overline{\sigma_{I_l}}\bigcap_{\text{codim-1}}
        \overline{\sigma_{I_l}^\cplm}\\
        &=\left(\overline{\bigcup_{x:x_l=1}\sigma(x)}\right)\bigcap_{\text{codim-1}}
        \left(\overline{\bigcup_{\xp:\xp_l=0}\sigma(\xp)}\right) \\
        &=\left(\bigcup_{x:x_l=1}\overline{\sigma(x)}\right)\bigcap_{\text{codim-1}}
        \left(\bigcup_{\xp:\xp_l=0}\overline{\sigma(\xp)}\right) \\
        &=\bigcup_{x,\xp:x_l\neq \xp_l}\mxx.
\end{align*}
The second line stems from Eq.~\eqref{sigil}. In the fourth line we use the distribution law and the definition ~\eqref{surface fragment def}. While in the third line we have used that
\begin{align*}
    &\left\{\begin{array}{l}
    \overline{\bigcup\sigma(x)}
    =\partial\left(\bigcup\sigma(x)\right)\cup\left(\bigcup\sigma(x)\right)
    \subseteq\bigcup(\overline{\sigma(x)})\\
    \bigcup(\overline{\sigma(x)})=\bigcup(\partial\sigma(x)\cup\sigma(x))
    =\left(\bigcup\partial\sigma(x)\right)\cup\left(\bigcup\sigma(x)\right)
    \subseteq\overline{\bigcup\sigma(x)}
    \end{array}\right.\\
    \Rightarrow &\left\{\begin{array}{l}
        \overline{\bigcup\sigma(x)}\subseteq\bigcup(\overline{\sigma(x)})\\
        \bigcup(\overline{\sigma(x)})\subseteq\overline{\bigcup\sigma(x)}
        \end{array}\right.
    \Rightarrow \overline{\bigcup\sigma(x)}=\bigcup\overline{\sigma(x)}.
\end{align*}
In a word, we have
\begin{equation}
    \tm(I_l)=\bigcup_{x,\xp:x_l\neq \xp_l}\mxx.
\end{equation}
The RHE $\sil$ then can be obtained by definition
\begin{align}
    \sil&=\text{Area}[\tm(I_l)]
    =\sum_{\substack{x,\xp:\\x_l=1,\xp_l=0}}\text{Area}[\mxx]\notag\\
    &=\frac{1}{2}\sum_{x,\xp}|x_l-\xp_l|\text{Area}[\mxx].
    \label{sil}
\end{align}
Since $\forall (x^1,{x^1}^\prime)\neq (x^2,{x^2}^\prime)$ and
$\text{Area} [m(x^1,{x^1}^\prime)\cap m(x^2,{x^2}^\prime)]=0$,
we can directly add the area of $\mxx$ for different $\xx$ pair.
Due to the symmetry between $x$ and $\xp$, there is a prefactor $1/2$
in the last term.


\subsection{Paste into right RH surface}
We already know how to get surface fragments and reconstruct min RH surface using these fragments.
The purpose is constructing the RH surface of $J_r$ by these fragments.

The first step is to know the difference between RH surface and ordinary surface region.
It turns out that,  differing from ordinary surface region, $J_r$  and its RH surface form a bulk region
$\overline{\sigma_{J_r}}$ which only contains union region $A_{J_r}$. Mathematically, this corresponds to
\begin{align}
    &\overline{\sigma_{J_r}}\supseteq A_{J_r}=\bigcup_{i\in J_r}A_i \notag,\\
    &\overline{\sigma_{J_r}}\nsupseteq A_j(j \notin J_r)\text{\quad or\quad}\overline{\sigma_{J_r}^\cplm}
    \supseteq A_{[n]\setminus J_r}=\bigcup_{i\notin J_r}A_i.
    \label{constraint of sigjr}
\end{align}

Above restriction implies $\sigma_{J_r}$ is bounded. We take an example to specify the bound
as follow. Let us consider a monogamy-like inequality of RHE (the formal proof will be given in \ref{sect5.2}):
$\tS_{A_1A_2}+\tS_{A_1A_3}+\tS_{A_2A_3}\geq \tS_{A_1}+\tS_{A_2}+
\tS_{A_3}+\tS_{A_1A_2A_3}$, where $I_1=\{1,2\},I_2=\{1,3\},I_3=\{2,3\},J_1=\{1\}$. Then we
get the bound of $\sigma_{J_1}$ by Eq.~\eqref{constraint of sigjr}, as seen in
Fig.~\ref{MMI_bound}.

The next step is to construct $\sigma_{J_r}$. Only by using bulk pieces $\sigx$ can we construct
$\sigma_{J_r}$ to get $m^{\prime} (J_r)$. As suggested in \cite{Akers:2021lms}, we can connect
$\sigma_{J_r}$ and $\sigx$ by setting a contraction map $f:\{0,1\}^L\rightarrow\{0,1\}^R$. Similar to
Eq.~\eqref{sigil}, one can define $\sigma_{J_r}^{\prime}$
\begin{equation}\label{sigjr'}
    \sigma_{J_r}^{\prime}\coloneqq\bigcup_{x:f(x)_r=1}\sigma(x),
\end{equation}
where $f(x)_r$ is the $r^{th}$ component of $f(x)\in\{0,1\}^R$. However, up to now $\sigma_{J_r}^{\prime}$
represents nothing since no constraint has been imposed on $f$ yet. Recalling that we would like to
construct $\sigma_{J_r}$ from  $\sigx$. Thus, if we treat $\sigma_{J_r}^{\prime}$ as $\sigma_{J_r}$, this
can be achieved by definition \eqref{sigjr}. In this case, $\sigma_{J_r}^{\prime}$ must satisfy
constraint~\eqref{constraint of sigjr}, which in turn will impose some constraints on $f$. In what
follows, we would like to show how to put constraints on $f$.

\begin{figure}[H]
    \centering
    \subfigure[lower bound]{
    \includegraphics[width=0.4\textwidth]{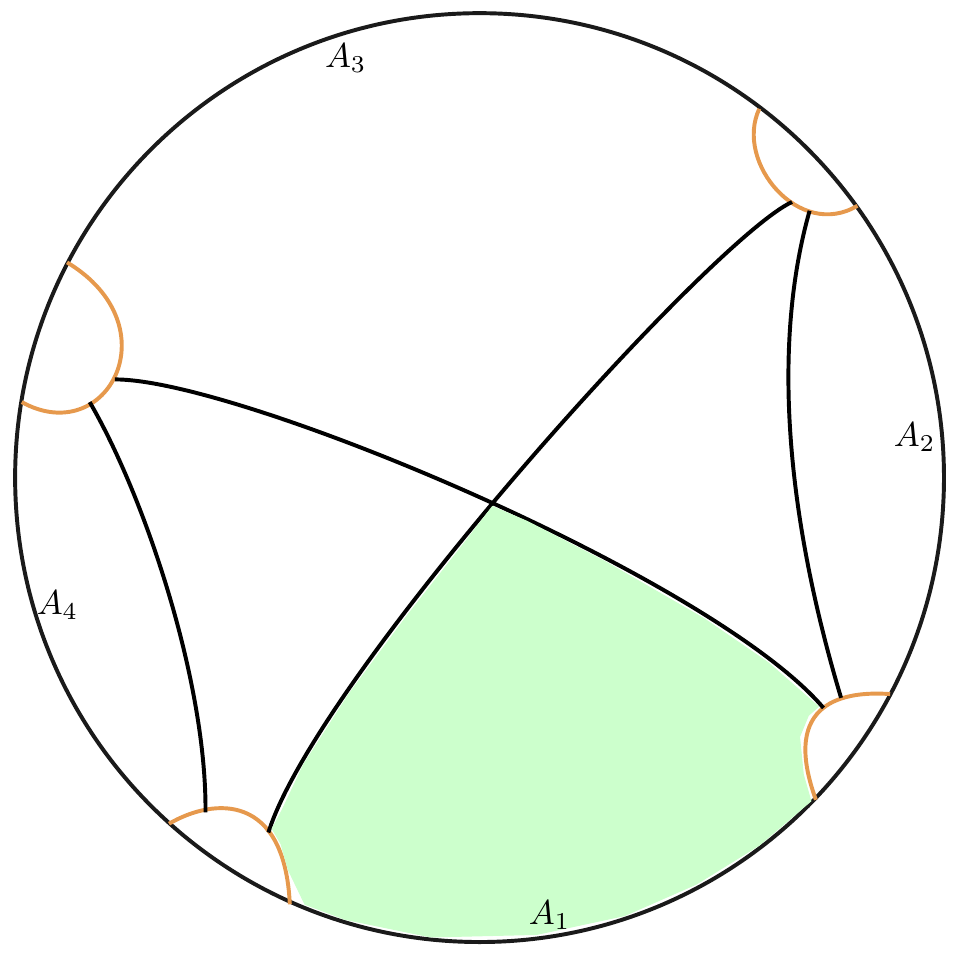}
    \label{lower bound}}
    \subfigure[upper bound]{
        \includegraphics[width=0.4\textwidth]{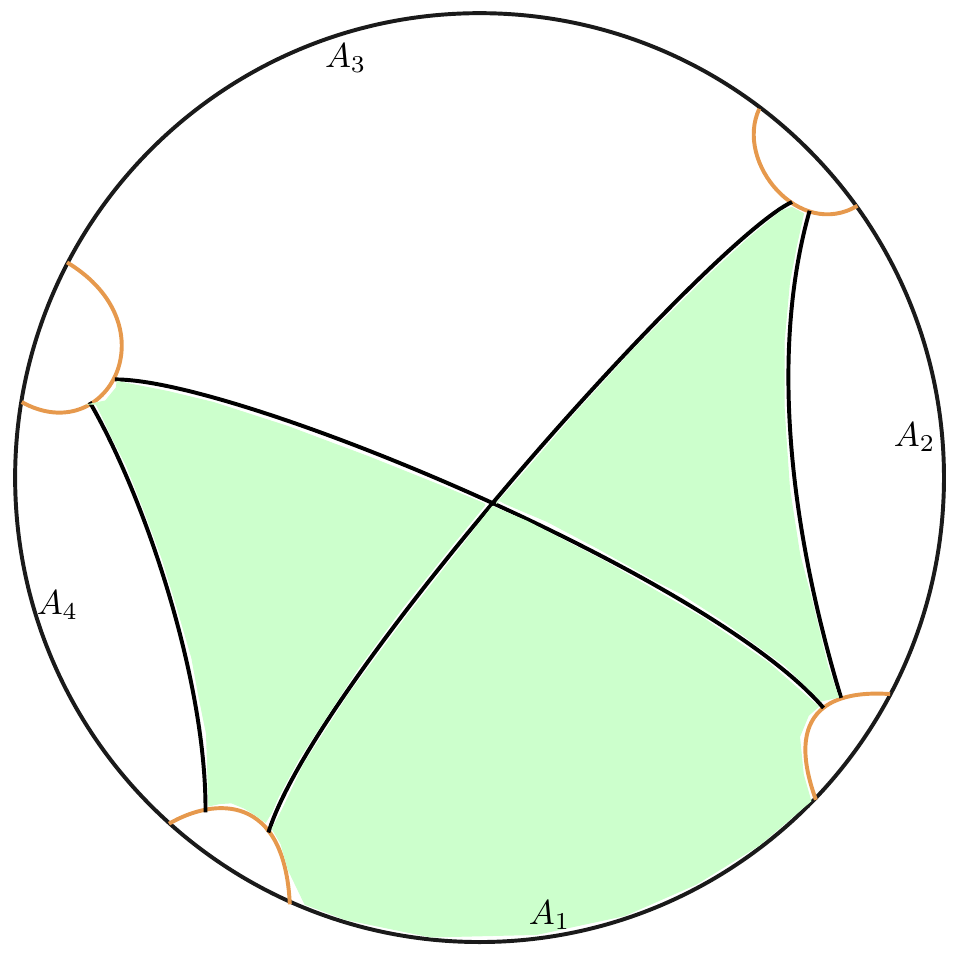}
        \label{upper bound}}
    \caption{The lower bound of $\sigma_{J_1}$ is shown in (a) and the upper bound of $\sigma_{J_1}$ is shown (b).
    The lower bound means intersection of all $\sigma_{J_r}$ defined by eq.~\eqref{sigjr} and the upper bound
    means union of all the $\sigma_{J_r}$.
    All $\overline{\sigma_{J_1}}$ bounded by eq.~\eqref{constraint of sigjr} contain RH surface of
    $A_1$. If $\overline{\sigma_{J_1}}$ is not bounded by eq.~\eqref{constraint of sigjr}, then
    $\p \sigma_{J_1}\setminus A_1$ will not correspond to RH surface of $A_1$.}
    \label{MMI_bound}
\end{figure}

So the third step is to add some basic constraints on $f$. First we should find all $\sigx$ that only contains $A_i$. Using $x^{(i)}$ to code the $\sigma(x^{(i)})$ that only contains $A_i$ and it can be written as
\begin{equation}
    x^{(i)}_l=\chi_{I_l}\equiv\left\{
        \begin{matrix}
            1 & i\in I_l,\\
            0 &i\notin I_l,
        \end{matrix}
    \right.
    \label{xi}
\end{equation}
where $\chi_{I_l}$ is a characteristic function defined by above equation.

Then we restrict $f$ so that $\sigma_{J_r}^{\prime}$ satisfies constraint~\eqref{constraint of sigjr}, that is,
\begin{equation}
    f(x^{(i)})_r=\chi_{J_r}\equiv\left\{
        \begin{matrix}
            1 & i\in J_r,\\
            0 & i\notin J_r,
        \end{matrix}
    \right.
    \label{fxi}
\end{equation}
where $\chi_{J_r}$ is also a characteristic function.

We will no longer distinguish $\sigma_{J_r}^{\prime}$ and $\sigma_{J_r}$ from now on,
since we have already required $\sigma_{J_r}^{\prime}$ to satisfy \eqref{constraint of sigjr} by restricting $f$. Thus we have
\begin{equation}
    \sigma_{J_r}=\bigcup_{x:f(x)_r=1}\sigma(x).
    \label{sigjr}
\end{equation}

Now we can follow what we did in getting $\tm(I_l)$ to obtain $m^{\prime} (J_r)$ immediately,
\begin{align*}
    m^\prime(J_r)&\coloneqq\overline{\sigma_{J_r}}\bigcap_{\text{codim-1}}
        \overline{\sigma_{J_r}^\cplm}\\
        &=\left(\overline{\bigcup_{x:f(x)_r=1}\sigma(x)}\right)\bigcap_{\text{codim-1}}
        \left(\overline{\bigcup_{\xp:f(\xp)_r=0}\sigma(\xp)}\right) \\
        &=\left(\bigcup_{x:f(x)_r=1}\overline{\sigma(x)}\right)\bigcap_{\text{codim-1}}
        \left(\bigcup_{\xp:f(\xp)_r=0}\overline{\sigma(\xp)}\right) \\
        &=\bigcup_{\substack{x,\xp:\\f(x)_r\neq f(\xp)_r}}\mxx.
\end{align*}

In the end, following eq~\eqref{sil}, we get the RHE for $J_r$,
\begin{align}
    \sjr&=\text{Area}[\tm(J_r)]\leq \text{Area}[m^\prime(J_r)]\notag\\
    &=\sum_{\substack{x,\xp:\\f(x)_r=1\\f(\xp)_r=0}}\text{Area}[\mxx]
    =\frac{1}{2}\sum_{x,\xp}|f(x)_r-f(\xp)_r|\text{Area}[\mxx].
    \label{sjr}
\end{align}


\subsection{Inequality of weight distance}\label{Inequality of weight distance}
Now we are stepping to show how it means if the inequality~\eqref{entropy inequality} holds.
From Eqs. \eqref{sil} and \eqref{sjr}, we can see that if
\begin{equation}\label{inequ1}
 \sum_{l=1}^L \alpha_l\frac{1}{2}\sum_{x,\xp}|x_l-\xp_l|
    \text{Area}[\mxx]\geq \sum_{r=1}^R \beta_r \frac{1}{2}\sum_{x,\xp}
    |f(x)_r-f(\xp)_r|\text{Area}[\mxx]
    \end{equation}
    holds, then the inequality~\eqref{entropy inequality} holds identically. While \eqref{inequ1} can be rewritten as
    \begin{equation}
    \sum_{x,\xp}\sum_{l=1}^L \alpha_l|x_l-\xp_l|
    \text{Area}[\mxx]\geq\sum_{x,\xp} \sum_{r=1}^R \beta_r
    |f(x)_r-f(\xp)_r|\text{Area}[\mxx].
\end{equation}
Therefore, if we define a weight distance $d_\omega(y,y^{\prime})$ by
\begin{equation}
    d_\omega(y,y^{\prime})\coloneqq\sum_i \omega_i\abs*{y_i-y^{\prime}_i},
\end{equation}
where $y_i,y^{\prime}_i$ are the $i^\text{th}$ component of $y,y^{\prime}$,
then we get
\begin{align}
    &\forall \xx\in \{\xx:\text{Area}[\mxx]\neq 0 \text{ and }x,x^{\prime} \in \{0,1\}^L\},\notag\\
    &d_\a(x,x^{\prime})\geq d_\b(f(x),f(x^{\prime}))
    \Rightarrow\sum_{l=1}^{L}\alpha_l\tilde{S}_{I_l}\geq \sum_{r=1}^{R}\beta_r\tilde{S}_{J_r}.
    \label{inequality equivalence}
\end{align}

So the problem has already been transformed from finding $m^{\prime}(J_r)$ to finding $f$ satisfying
basic constraints~\eqref{constraint of sigjr} and inequality of weight distance
$d_\a(x,x^{\prime})\geq d_\b(f(x),f(x^{\prime}))$ for all $\xx:\text{Area}[\mxx]\neq 0$, since
$m^{\prime}(J_r)$ is defined by $\sigma_{J_r}$ which is completely determined by $f$.
If we find $f$ that satisfies the conditions, we prove the inequality~\eqref{entropy inequality}.
We should emphasize, however, that if we cannot find $f$ satisfying the conditions
$d_\a(x,x^{\prime})\geq d_\b(f(x),f(x^{\prime}))$, it does not mean the inequality of RHE is
false. In other words, condition $d_\a(x,x^{\prime})\geq d_\b(f(x),f(x^{\prime}))$ is a sufficient
condition but is not a necessary one.

In the next section we will establish a general method to find $f$.


\section{Method of finding contraction map}
If an exhaustive method is used to find $f(x)$ for all $x$, the number of all situations of $f$ will be the order of $(2^R)^{2^L}$, which is an astronomical number. Only rule out most cases can we find $f$ in a relatively simple way. If we take a look at \eqref{inequality equivalence} carefully, it implies there exist three conditions for $f$ and $\xx$ needed to be obeyed, namely nonempty bulk pieces of $x$, $\text{Area}[\mxx]\neq 0$ and $d_\a(x,x^{\prime})\geq d_\b(f(x),f(x^{\prime}))$. Hence, it turns out one can find $f$ by the following steps:
\begin{itemize}
    \item First, we should exclude the bitstrings $x$ corresponding to empty bulk pieces. We will show
    that all bitstrings $x$ of empty bulk piece form a sub-set of all $x$ whose $g(x)$ is not equal to
    $\{1\}^{L^2}$, where $g(x)$ is defined in the following in Eq.~\eqref{gx}.
    \item Second, we should exclude the $\xx$ which corresponds to zero-area contribution, i.e., those
    corresponding to $\text{Area}[\mxx]= 0$. We will show that only for pairs $\xx$ with unit
    distance can $\mxx$ contribute nonzero area. The distance of $\xx$ denoted by $d \xx$ is
    defined as $d_\a \xx$, where $\a _l=1$, and $l=1,\dots,L$.
    \item Third, we then find $f(x)$ of $x$ corresponding to nonempty bulk piece. After requiring
    that $\forall x,x^{\prime}\in\mathbb{B}$, $f$ satisfies inequality of weight distance, one can finally
    find the map $f$. We will give the detailed steps to find these $f(x)$ in section \ref{find f}.
\end{itemize}


\subsection{Exclude empty bulk piece}

As a first step, let us define a mapping $g$ so as to distinguish whether $x$ correspond to empty bulk piece or
not. To proceed, we define $I^{0}\coloneqq[n]\setminus I,I^{1}\coloneqq I$. Then we define
mapping $g$ as follow
\begin{definition}
    Let $g$ be a map from $\{0,1\}^L$ to $\{0,1\}^{L^2}$ and
    $g(x)_{l,l^{\prime}}$ is the $[(l-1)L+l^{\prime}]^{\text{th}}$ component of $g(x)$, and
    \begin{equation}
        g(x)_{l,l^{\prime}}=\left\{
            \begin{matrix}
                1, & I_l^{x_l}\cap I_{l^{\prime}}^{x_{l^{\prime}}}\neq\emptyset,\\
                0,& I_l^{x_l}\cap I_{l^{\prime}}^{x_{l^{\prime}}}=\emptyset.
            \end{matrix}
        \right.
        \label{gx}
    \end{equation}
    \label{def gx}
\end{definition}
\begin{lemma}
    Min RH surfaces of two non-overlapping systems $A,B$ do not cross and the intersection of
    their RH bulk regions denoted as $\sigma_A\cap\sigma_B$ is empty.
    \label{intersection lemma}
\end{lemma}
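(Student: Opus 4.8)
The plan is to run the standard ``cut-and-paste'' rearrangement for minimal surfaces, adapted to relative homology. Let $\sigma_A,\sigma_B$ be RH bulk regions for minimal RH surfaces $\tilde m(A),\tilde m(B)$, so that $\partial\sigma_A=\tilde m(A)\cup A$ and $\partial\sigma_B=\tilde m(B)\cup B$ (rel $R$), with $\tilde S_A=\mathrm{Area}[\tilde m(A)]$ and $\tilde S_B=\mathrm{Area}[\tilde m(B)]$. Throughout I would count only the area of the part of a boundary lying in the interior of $M'$, so that neither the boundary regions $A_i$ nor $R$ contribute, consistent with the no-flux prescription of \eqref{RH entropy}; with this convention $\mathrm{Area}[\partial_{\mathrm{int}}\sigma_A]=\tilde S_A$ and $\mathrm{Area}[\partial_{\mathrm{int}}\sigma_B]=\tilde S_B$, where $\partial_{\mathrm{int}}$ denotes the part of the boundary interior to $M'$.

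First I would record the relevant homology facts. Since $A$ and $B$ are non-overlapping, $\overline{\sigma_B}$ meets $\partial M$ only along $\overline B$, which is disjoint from the interior of $A$; hence $\sigma_A\setminus\overline{\sigma_B}$ still contains all of $A$, and, since its interior boundary is assembled only from pieces of $\tilde m(A)$ and $\tilde m(B)$, it is a bulk region homologous to $A$ rel $R$. Symmetrically $\sigma_B\setminus\overline{\sigma_A}$ is homologous to $B$ rel $R$. Second, I would invoke submodularity of the (relative) area functional --- equivalently, the cut-and-paste reassignment of the pieces into which the partition $\{\sigma_A\cap\sigma_B,\ \sigma_A\setminus\sigma_B,\ \sigma_B\setminus\sigma_A,\ (\sigma_A\cup\sigma_B)^\complement\}$ of $M'$ cuts $\tilde m(A)\cup\tilde m(B)$. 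Concretely, applying $P(E)+P(F)\geq P(E\cup F)+P(E\cap F)$ with $E=\sigma_A$, $F=\sigma_B^\complement$ (and using that relative perimeter is invariant under complementation within $M'$) gives
\begin{equation}
\mathrm{Area}\!\left[\partial_{\mathrm{int}}(\sigma_A\setminus\sigma_B)\right]+\mathrm{Area}\!\left[\partial_{\mathrm{int}}(\sigma_B\setminus\sigma_A)\right]\ \leq\ \mathrm{Area}[\tilde m(A)]+\mathrm{Area}[\tilde m(B)].
\end{equation}
On the other hand, by the homology facts and minimality of $\tilde m(A),\tilde m(B)$,
\begin{equation}
\mathrm{Area}\!\left[\partial_{\mathrm{int}}(\sigma_A\setminus\sigma_B)\right]\geq\tilde S_A=\mathrm{Area}[\tilde m(A)],\qquad \mathrm{Area}\!\left[\partial_{\mathrm{int}}(\sigma_B\setminus\sigma_A)\right]\geq\tilde S_B=\mathrm{Area}[\tilde m(B)].
\end{equation}
Comparing the two displays, every inequality is saturated, so $\sigma_A\setminus\sigma_B$ is itself a minimal RH bulk region for $A$ and $\sigma_B\setminus\sigma_A$ one for $B$.

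Finally I would simply replace $\sigma_A\mapsto\sigma_A\setminus\sigma_B$ and $\sigma_B\mapsto\sigma_B\setminus\sigma_A$: these are disjoint open regions, $(\sigma_A\setminus\sigma_B)\cap(\sigma_B\setminus\sigma_A)=\emptyset$, which is the second assertion; and, bounding disjoint bulk regions, the two new minimal RH surfaces cannot cross, which is the first. (When the minimal RH surface of $A$ or of $B$ is unique --- the generic case --- this shows the original $\sigma_A\cap\sigma_B$ was already empty.)

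I expect the main obstacle to be making the submodularity step rigorous in this curved, relative-homology setting: treating $\tilde m(A),\tilde m(B)$ as (possibly non-smooth) codimension-one objects, checking that the interior boundary of $\sigma_A\setminus\sigma_B$ introduces no area beyond sub-pieces of $\tilde m(A)\cup\tilde m(B)$, dealing with the portions of these surfaces anchored on $R$ under the Neumann condition, and handling the equality-case bookkeeping. The homological claims and the disjointness conclusion are then routine.
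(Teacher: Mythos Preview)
Your proposal is correct and is essentially the same cut-and-paste uncrossing argument the paper uses: the paper simply packages it as a direct contradiction (swap the portion of $\tilde m(A)$ bounding $\sigma_A\cap\sigma_B$ for the corresponding portion of $\tilde m(B)$, assuming WLOG the latter has strictly smaller area), whereas you phrase the same rearrangement via submodularity of the relative perimeter followed by a saturation-then-replace step. Your parenthetical about non-uniqueness is apt---the paper's WLOG strict-inequality assumption tacitly presumes the generic case of unique minimizers, which your version handles more explicitly.
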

\begin{proof}
    Suppose two non-overlapping systems $A,B$ and
    their min RH surfaces $\tm(A),\tm(B)$. Their RH regions are $\sigma_A,\sigma_B$ formed
    by $A,B$ and $\tm(A),\tm(B)$.

    Now let us assume that $\tm(A),\tm(B)$ will cross,
    then we get $\sigma_A\cap\sigma_B\neq\emptyset$ as shown in Fig.~\ref{RH cross}.  In this case,
    we find that $(\tm(A)\setminus\partial(\sigma_A\cap\sigma_B))\cup
    (\tm(B)\cap\partial(\sigma_A\cap\sigma_B))$ is still a RH surface of $A$.
    Without loss of generality, let us assume
    $\text{Area}[\partial(\sigma_A\cap\sigma_B)\cap\tm(B)]<
    \text{Area}[\partial(\sigma_A\cap\sigma_B)\cap\tm(A)]$, then
    $\text{Area}[(\tm(A)\setminus\partial(\sigma_A\cap\sigma_B))\cup
    (\tm(B)\cap\partial(\sigma_A\cap\sigma_B))]<\text{Area}[\tm(A)]$ which is
    contradiction with the fact that $\tm(A)$ is the min RH surface of $A$. The discussion is same for $B$.
\end{proof}

\begin{figure}[H]
    \centering
    \includegraphics[width=0.4\textwidth]{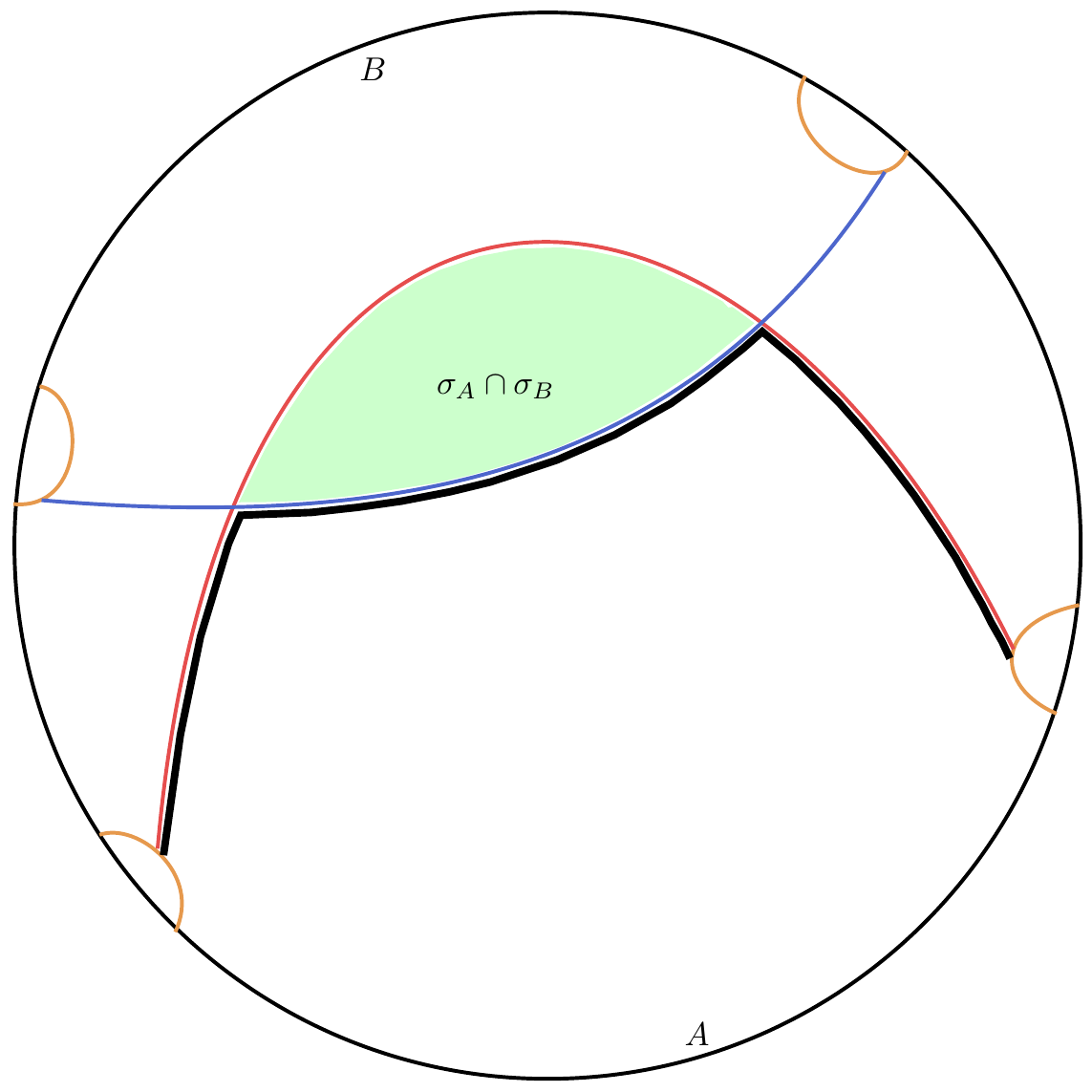}
    \caption{The red line is $\tm(A)$, the blue line is $\tm(B)$ and the green shadow is
    $\sigma_A\cap\sigma_B$. And the black outline is $(\tm(A)\setminus\partial
    (\sigma_A\cap\sigma_B))\cup(\tm(B)\cap\partial(\sigma_A\cap\sigma_B)$.}
    \label{RH cross}
\end{figure}

\begin{theorem}
    $x$ corresponds to an empty bulk piece if $g(x)$ is not equal to $\{1\}^{L^2}$, where
    $\{1\}^{L^2}$ means $\forall\ l,l^{\prime}, g(x)_{l,l^{\prime}}=1$.
    \label{gx theorem}
\end{theorem}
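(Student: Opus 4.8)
The plan is to reduce Theorem~\ref{gx theorem} to the disjointness statement already established in Lemma~\ref{intersection lemma}. First I would rewrite the bulk piece in a more convenient form. From \eqref{bulk pieces def}, $\sigx=\bigcap_{l=1}^{L}\sigma_{I_l}^{x_l}$ with $\sigma_I^{1}=\sigma_I$ and $\sigma_I^{0}=\sigma_I^{\cplm}=M^{\prime}\setminus\sigma_I$, so I want to recognize the complement $\sigma_I^{\cplm}$ as itself an RH bulk region. Since $M^{\prime}$ is bounded by $A_{[n]}$ together with $R$ and the flow obeys a no-flux condition on $R$, any max flow out of $A_I$ stays inside $M^{\prime}$ and must cross $\tm(I)$ before reaching $A_{[n]\setminus I}$; as observed below \eqref{MRHE}, this forces $\tm(I)=\tm([n]\setminus I)$. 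Hence $\p(\sigma_I^{\cplm})=\tm(I)\cup A_{[n]\setminus I}$ (the part of $\p M^{\prime}$ outside $A_I$ being $A_{[n]\setminus I}$, with $R$ contributing only a measure-zero locus), which is precisely the defining condition of $\sigma_{[n]\setminus I}$. Writing $I_l^{1}=I_l$ and $I_l^{0}=[n]\setminus I_l$ as in Definition~\ref{def gx}, this gives $\sigma_{I_l}^{x_l}=\sigma_{I_l^{x_l}}$, and therefore
\begin{equation*}
    \sigx=\bigcap_{l=1}^{L}\sigma_{I_l^{x_l}}.
\end{equation*}

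Next I would unwind the hypothesis. If $g(x)\neq\{1\}^{L^2}$, then by \eqref{gx} there is a pair $(l,l^{\prime})$ with $I_l^{x_l}\cap I_{l^{\prime}}^{x_{l^{\prime}}}=\emptyset$. If one of these index sets is empty --- which can only happen if, say, $I_l=[n]$ and $x_l=0$, so that $\sigma_{I_l}^{x_l}=\sigma_{[n]}^{\cplm}=\emptyset$ (as $\sigma_{[n]}=M^{\prime}$) --- then $\sigx\subseteq\sigma_{I_l}^{x_l}=\emptyset$ already. Otherwise $A_{I_l^{x_l}}=\bigcup_{i\in I_l^{x_l}}A_i$ and $A_{I_{l^{\prime}}^{x_{l^{\prime}}}}$ are two nonempty, non-overlapping boundary systems, so Lemma~\ref{intersection lemma} applies and yields $\sigma_{I_l^{x_l}}\cap\sigma_{I_{l^{\prime}}^{x_{l^{\prime}}}}=\emptyset$. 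Combining with the identity above, $\sigx=\bigcap_{k=1}^{L}\sigma_{I_k^{x_k}}\subseteq\sigma_{I_l^{x_l}}\cap\sigma_{I_{l^{\prime}}^{x_{l^{\prime}}}}=\emptyset$, so $x$ indeed corresponds to an empty bulk piece.

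I expect the only real obstacle to be the identification $\sigma_I^{\cplm}=\sigma_{[n]\setminus I}$: it is not a formal set-theoretic manipulation of \eqref{bulk pieces def}, but rests on $\tm(I)$ being simultaneously the minimal RH surface of $A_I$ and of $A_{[n]\setminus I}$, which is a genuine consequence of the Neumann condition on $R$ confining the flow to $M^{\prime}$ (and of treating $R$, and more generally any measure-zero intersection of closures, as not contributing to the homology/area bookkeeping). Once this is granted the rest is immediate, since $g(x)_{l,l^{\prime}}=0$ is exactly the ``the two boundary systems $A_{I_l^{x_l}}$ and $A_{I_{l^{\prime}}^{x_{l^{\prime}}}}$ do not overlap'' hypothesis that Lemma~\ref{intersection lemma} consumes.
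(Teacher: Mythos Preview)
Your proof is correct and follows essentially the same route as the paper: both reduce the hypothesis $g(x)_{l,l^{\prime}}=0$ to an application of Lemma~\ref{intersection lemma} by invoking $\tm(I)=\tm([n]\setminus I)$ to identify $\sigma_{I_l}^{x_l}$ with $\sigma_{I_l^{x_l}}$, and then conclude $\sigx\subseteq\sigma_{I_l}^{x_l}\cap\sigma_{I_{l^{\prime}}}^{x_{l^{\prime}}}=\emptyset$. Your version is simply more explicit about the identification $\sigma_I^{\cplm}=\sigma_{[n]\setminus I}$ and handles the degenerate case $I_l^{x_l}=\emptyset$ separately, but the substance is the same.
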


\begin{proof}
    According to Lemma.~\ref{intersection lemma}, we can get that $\tm(I)$ and $\tm(I^{\prime})$
    do not intersect when $I\cap I^{\prime}=\emptyset$. In addition,
    $\tm(I)$ and $\tm(I^{\prime})$ still do not intersect if we have
    $I\cap ([n]\setminus I^{\prime})=\emptyset$ instead of $I\cap I^{\prime}=\emptyset$,
    since $\tm(I)=\tm([n]\setminus I)$ holds by choosing
    $R=\{R\in M:\partial R=\bigcup_{i\in[n]}\partial A_i\}$.

    If $g(x)\neq \{1\}^{L^2}$, then $\exists\ l,l^{\prime}, g(x)_{l,l^{\prime}}=0$.
    According to Definition.~\ref{def gx} and above statement,
    we will get $\sigx\subseteq \sigma_{I_l}^{x_l}\cap \sigma_{I_{l^{\prime}}}^{x_{l^{\prime}}}=\emptyset$,
    which exactly means that $x$ corresponds to an empty bulk piece.
\end{proof}

We will denote the supset of all $x$ encoding nonempty bulk pieces by $\mathbb{B}$.
In other words, for every $x$ encoding nonempty bulk pieces, $x\in \mathbb{B}$.
That is, we define $\mathbb{B}$ as
\begin{equation}
    \mathbb{B}\coloneqq\{x:g(x)=\{1\}^{L^2}\text{ and } x\in\{0,1\}^{L}\}.
\end{equation}

In summary, we only need to find $f(x)$ for nontrivial
$x\in \mathbb{B}$, since $f(x)$ for trivial
$x\in \{x:g(x)\neq\{1\}^{L^2}\}$ will not affect inequality~\eqref{entropy inequality}.


\subsection{Exclude surface fragment of zero-area contribution} \label{4.2}

Now we would like to find all pair $\xx$ whose corresponding $\mxx$ with nonzero area contribution.
Such $\xx$ have a close relationship with $d\xx$ as we will see.

\begin{definition}
    Let $\sigma_{I_l}(l=1,\dots,L)$ be an open set and we can generate a topology $\mathcal{T}$
    that is a set of open subsets of $M^\prime$ satisfying:
    \begin{itemize}
        \item $\emptyset,M^{\prime},\sigma_{I_1},\dots,\sigma_{I_L}\in\mathcal{T}$,
        \item $\forall \a\in\Lambda$ where $\Lambda$ is some either finite or infinite set,
        $O_\a\subset M^{\prime}$,
        if $\{O_\a\}_{\a\in\Lambda}\subseteq\mathcal{T}$,
       then $\bigcup_{\a\in\Lambda}O_\a \in \mathcal{T}$,
        \item $m\in \mathbb{N}$, $\forall i\in[m], O_i\subset M^{\prime}$,
        if $\{O_i\}_{i\in[m]}\subseteq\mathcal{T}$,
        then $\bigcap_{i=1}^m O_i \in\mathcal{T}$.
    \end{itemize}
    \label{topology}
\end{definition}

\begin{definition}
Let $\xx$ be a pair where $x,x^{\prime} \in \mathbb{B}$.
$\xx$ is {\bf directly connected} if and only if only one component of $x$ and $x^{\prime}$ is different.
$\xx$ is {\bf connected} if there is a path, for all $x^i\in \mathbb{B},(i=1,\cdots,n)$ on the path,
$(x^i,x^{i+1})$ is always directly connected where $x^1=x$ and $x^n=x^{\prime}$.
\label{connection}
\end{definition}

For example, we can see Fig.~\ref{SSA_cut} where $(00,01),(00,10),(01,11),(10,11)$ are all
directly connected. And according to Definition.~\ref{connection},
$(00,11),(01,10)$ are connected since there are paths $00\rightarrow 01\rightarrow 11$ and
$01\rightarrow 00\rightarrow 10$ connecting $00,11$ and $01,10$ where $(00,01),(01,11)$ and
$(01,00),(00,10)$ are directly connected. We should mention that two geometrically joint regions
$\sigx$ and $\sigma(\xp)$ do not always mean that the corresponding pairs $\xx$ are directly
connected and vice versa. One explicit example is depicted in Fig. \ref{directly_connected}.

\begin{figure}[H]
  \centering
  \subfigure[]{
    \includegraphics[width=0.4\textwidth]{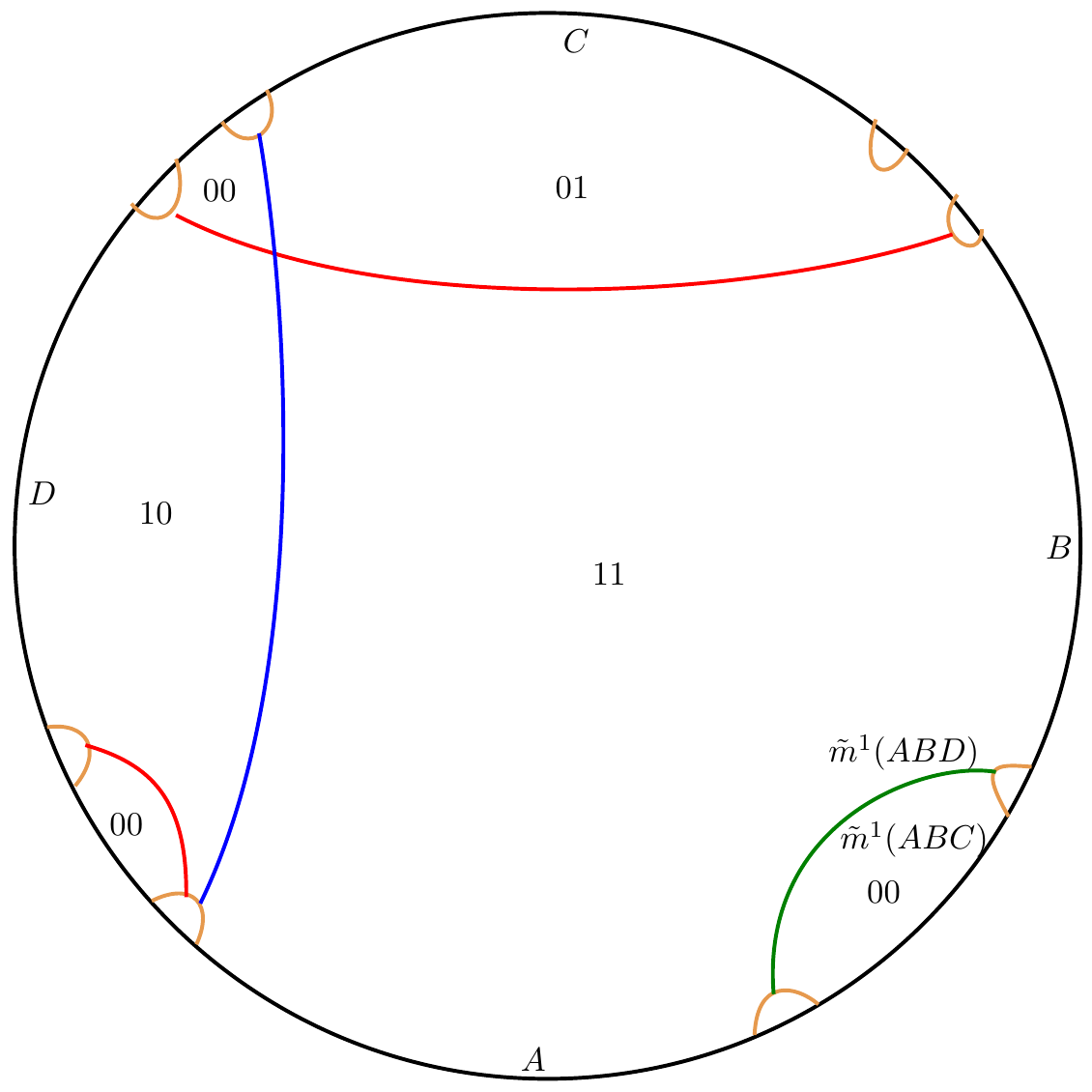}}
    \subfigure[]{
        \includegraphics[width=0.44\textwidth]{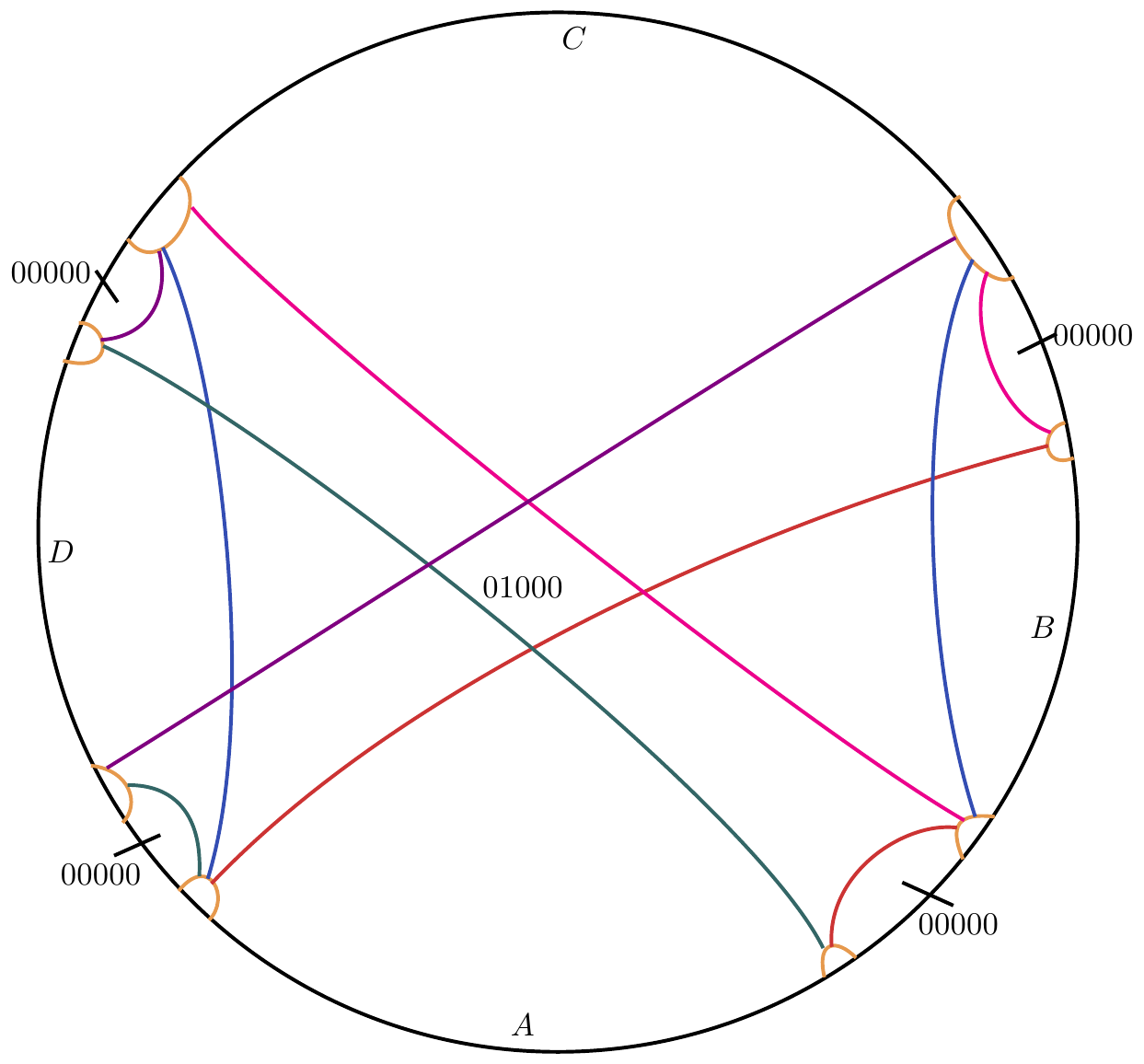}}
  \caption{(a) Example that  $\sigx$ and $\sigma(\xp)$ are geometrical
  neighbors while the corresponding pairs $\xx$ are NOT directly connected.
  Let $I_1=ABC,I_2=ABD$, then $\sigma(00)$ and $\sigma(11)$ are geometrical
  neighbors, while the corresponding pairs ($x=00,x'=11$) are not directly connected.
  Note that in this case the surface fragments $\tm^1(ABC)$ and $\tm^1(ABD)$ are
  overlapped and are marked by green line.
  (b) Example that $\xx$ are directly connected while the corresponding relative
  regions $\sigx$ and $\sigma(\xp)$ are NOT geometrical neighbors. Let
  $I_1=AB,I_2=AC,I_3=AD,I_4=BC,I_5=CD$, then $(00000,01000)$ is directly
  connected, while $\sigma(01000)$ and $\sigma(00000)$ are not geometrical
  neighbors.}
  \label{directly_connected}
\end{figure}

\begin{definition}
   Let $m_l(x)$ encoded by $x\in\{0,1\}^L$ be a fragment of $\tm(I_l)$ cut by $\tm(I_{i}),i\in[L]\setminus\{l\}$
    and $x^{\prime}_i=(1-\delta_{il}) x_i$\footnote{We do not use sum rule here. Namely, we do not sum over
    $i$ in this formula.}, $x_l=1$. Then $m_l(x)$ is defined as
    \begin{equation}
        m_l(x)\coloneqq\bigcap_{\substack{i=1\\i\neq l}}^L \sigma_{I_i}^{x_i}\cap \tm(I_l)
        =[\sigma(x)\cup\sigma(x^{\prime})]\cap\tm(I_l).
        \label{fragment of mil}
    \end{equation}
    \label{mlx}
\end{definition}
We can easily find that the definition of $m_l(x)$ is irrelevant to $x_l$, so the pair $\xx$
in which only the $l^{th}$ component is different can also correspond to the fragment of $\tm(I_l)$.
Please see Fig.~\ref{cut mil} for a more intuitive understanding.

According to Definition.~\ref{topology}, Eq.~\eqref{fragment of mil} can be written as
\begin{align}
    m_l(x)=&\left(\bigcap_{x_i=1,i\neq l}\sigma_{I_i}\right)
    \cap\left(\bigcup_{x_j=0,j\neq l}\sigma_{I_j}\right)^\cplm\cap \tm(I_l)\notag\\
    =&O_\a(x,l)\cap O_\b(x,l)^\cplm\cap \tm(I_l),
\end{align}
where
\begin{align*}
    O_\a(x,l)\coloneqq\bigcap_{x_i=1,i\neq l}\sigma_{I_i}\in \mathcal{T},
    O_\b(x,l)\coloneqq\bigcup_{x_i=0,i\neq l}\sigma_{I_i}\in \mathcal{T}.
\end{align*}

So far we have known that all surface fragments $\mxx$ correspond to $\xx$. The next question is what the
relationship between $\mxx$ and $m_l(x)$ should be. The following lemma tells us their areas
are equal only if $\xx$ is directly connected.

\begin{lemma}
    If $\xx$ is directly connected, then $\text{Area}[\mxx]=\text{Area}[m_l(x)]$.
    \label{area lemma}
\end{lemma}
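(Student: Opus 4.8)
The plan is to compare the two surface fragments $\mxx$ and $m_l(x)$ directly in the case that $\xx$ is directly connected, say differing only in the $l^{\text{th}}$ component with $x_l=1,\ \xp_l=0$ (so $\xp$ is the $x^\prime$ of Definition~\ref{mlx}). First I would unpack $m_l(x)=[\sigma(x)\cup\sigma(\xp)]\cap\tm(I_l)$ and $\mxx=\overline{\sigma(x)}\bigcap_{\text{codim-1}}\overline{\sigma(\xp)}$, and observe that since $x,\xp$ differ only in the $l^{\text{th}}$ slot, $\sigma(x)$ and $\sigma(\xp)$ lie on the two opposite sides of $\tm(I_l)$ (one inside $\sigma_{I_l}$, one in its complement), while agreeing on every other $\sigma_{I_i}^{x_i}$. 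Hence the only codimension-1 interface between $\overline{\sigma(x)}$ and $\overline{\sigma(\xp)}$ is a piece of $\tm(I_l)$, and conversely the portion of $\tm(I_l)$ meeting $\bigcap_{i\ne l}\sigma_{I_i}^{x_i}$ is exactly that interface. I would make this precise using the topology-$\mathcal{T}$ reformulation already given, writing $m_l(x)=O_\a(x,l)\cap O_\b(x,l)^\cplm\cap\tm(I_l)$ and matching this set, up to a measure-zero (lower-dimensional) boundary discrepancy, with $\mxx$.

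The key steps in order: (1) Fix the single differing index $l$ and record $\sigma(x)=\sigma_{I_l}\cap W$, $\sigma(\xp)=\sigma_{I_l}^\cplm\cap W$ where $W=\bigcap_{i\ne l}\sigma_{I_i}^{x_i}$ is the common part; note $W=O_\a(x,l)\cap O_\b(x,l)^\cplm$ by the distributive manipulation already used in deriving $\tm(I_l)$. (2) Take closures: using the identity $\overline{\bigcup \sigma(x)}=\bigcup\overline{\sigma(x)}$ proved in the excerpt (and its evident intersection analogue for the finite families involved), argue $\overline{\sigma(x)}\cap\overline{\sigma(\xp)}$ agrees with $\overline{W}\cap\tm(I_l)$ away from a set of dimension $\le$ codim-2. (3) Restrict to codimension-1: since areas ignore lower-dimensional sets, conclude $\text{Area}[\mxx]=\text{Area}[\overline W\cap\tm(I_l)]=\text{Area}[m_l(x)]$, the last equality because $m_l(x)$ and $\overline W\cap\tm(I_l)$ differ only by part of $\partial W$, again codim-$\ge 2$ inside $\tm(I_l)$.

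The main obstacle I expect is controlling the boundary/closure discrepancies rigorously — i.e., proving that $\overline{\sigma(x)}\cap\overline{\sigma(\xp)}$, $\overline W\cap\tm(I_l)$, and $m_l(x)=[\sigma(x)\cup\sigma(\xp)]\cap\tm(I_l)$ all coincide modulo sets that carry zero codimension-1 area. The subtlety is that $\sigma(x),\sigma(\xp)$ are open bulk pieces, so their closures can pick up stray pieces of other $\tm(I_i)$'s ($i\ne l$) along $\partial W$; one must check these extra pieces sit in codimension $\ge 2$ (generic transversality of the minimal surfaces, or more safely an argument that any codim-1 overlap would contradict minimality via a cut-and-paste as in Lemma~\ref{intersection lemma}). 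Once that codimension bookkeeping is pinned down, the area identity is immediate. I would also flag, but not belabor, the role of ``directly connected'': if $\xx$ differs in $\ge 2$ components then $\overline{\sigma(x)}\cap\overline{\sigma(\xp)}$ has no codim-1 part at all (it is a corner where several $\tm(I_i)$ meet), so both sides can fail to match — which is exactly why the hypothesis is needed and motivates the zero-area discussion of Section~\ref{4.2}.
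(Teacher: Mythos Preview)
Your proposal is correct and follows essentially the same route as the paper: both arguments reduce to showing that $\mxx$ and $m_l(x)=[\sigma(x)\cup\sigma(\xp)]\cap\tm(I_l)$ differ only by a set contained in $\bigcup_{j\neq l}\tm(I_j)\cap\tm(I_l)$, which is generically codimension-2 and hence contributes zero area. Your auxiliary region $W=\bigcap_{i\ne l}\sigma_{I_i}^{x_i}$ is exactly the paper's $\sigma(x)\cup\sigma(\xp)$, and your anticipated obstacle (codimension-1 overlaps of distinct minimal surfaces) is precisely the caveat the paper flags and defers to Section~\ref{discuss}.
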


\begin{proof}
    \begin{align*}
        &[\sigma(x)\cup\sigma(x^{\prime})]\cap\sigma_{I_i}=\sigma(x)\\
        &[\sigma(x)\cup\sigma(x^{\prime})]\cap\sigma_{I_i}^\cplm=\sigma(x^{\prime})\\
        \Rightarrow\overline{\sigma(x)}\cap\overline{\sigma(x^{\prime})}
        &=\overline{[\sigma(x)\cup\sigma(x^{\prime})]\cap\sigma_{I_i}}\cap
        \overline{[\sigma(x)\cup\sigma(x^{\prime})]\cap\sigma_{I_i}^\cplm}\\
        &\subseteq\overline{\sigma(x)\cup\sigma(x^{\prime})}\cap\overline{\sigma_{I_i}^\cplm}
        =\overline{\sigma(x)\cup\sigma(x^{\prime})}\cap\tm(I_i)\\
        \Rightarrow\overline{\sigma(x)}\cap\overline{\sigma(x^{\prime})}&\setminus
        [(\sigma(x)\cup\sigma(x^{\prime}))\cap\tm(I_i)]\\
        &\subseteq[\overline{\sigma(x)\cup\sigma(x^{\prime})}\cap\tm(I_i)]
        \setminus[(\sigma(x)\cup\sigma(x^{\prime}))\cap\tm(I_i)]\\
        &\subseteq[\overline{\sigma(x)\cup\sigma(x^{\prime})}
        \setminus(\sigma(x)\cup\sigma(x^{\prime}))]\cap\tm(I_i)\\
        &\subseteq\p(\sigma(x)\cup\sigma(x^{\prime}))\cap\tm(I_i)\\
        &\subseteq\left(\bigcup_{j\neq i}\tm(I_j)\right)\cap\tm(I_i)
        =\bigcup_{j\neq i}\left[\tm(I_j)\cap\tm(I_i)\right].
    \end{align*}

    For $\forall i\neq j$, the dimension of $\tm(I_i)\cap\tm(I_j)$ is codimension-2
    (If there is some overlapping region of $\tm(I_l)$, then $\tm(I_i)\cap\tm(I_j)$ may be codimension-1.
    We will discuss this situation in section \ref{discuss}), where its area contribution is zero.
    So $\mxx$ and $m_l(x)$ only have the difference in zero-area contribution region.
\end{proof}
This lemma can be understood intuitively by seeing Fig.~\ref{cut mil}.

\begin{figure}[H]
    \centering
    \includegraphics[width=0.4\textwidth]{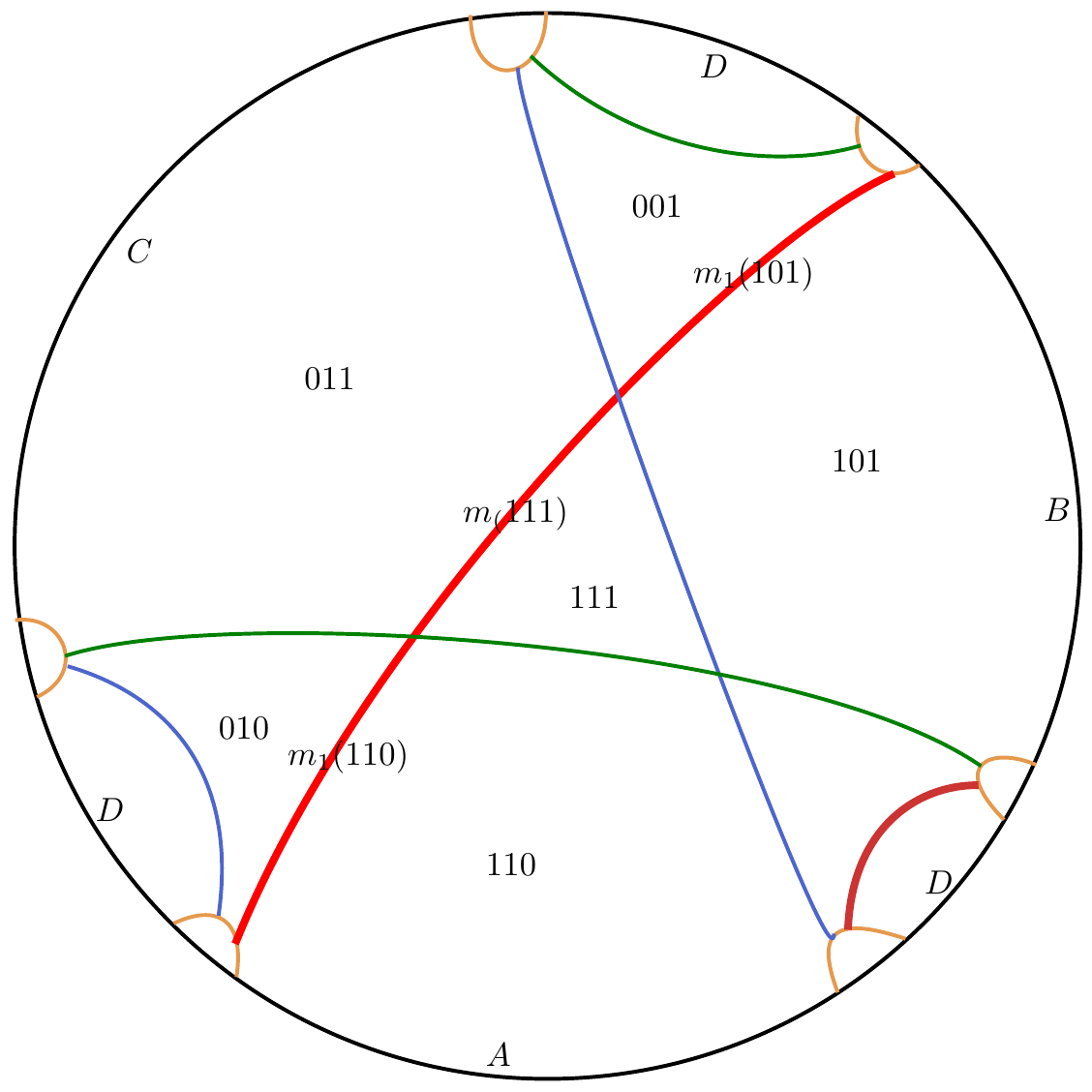}
    \caption{We choose $I_1,I_2,I_3$ as $AB,AC,BC$ as before.
    The bold red line is $\tm(AB)$, blue and green lines are $\tm(AC)$ and $\tm(BC)$ respectively.
    $\tm(AB)$ is cut into three parts by $\tm(AC)$ and $\tm(BC)$. We can see that $m_1(110)$,
    $m_1(111)$ and $m_1(101)$ are almost the same as $m(010,110)$, $m(011,111)$ and $m(001,101)$
    respectively.}
    \label{cut mil}
\end{figure}

\begin{theorem}
    Only if $\xx$ is directly connected can $\mxx$ contribute nonzero area.
    \label{nonzero area theorem}
\end{theorem}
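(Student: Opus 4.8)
The plan is to argue the contrapositive: if $\xx$ is \emph{not} directly connected, then $\text{Area}[\mxx]=0$. So assume $x,\xp\in\mathbb{B}$ differ in at least two components. I would first invoke the combinatorial fact that in this situation one can always interpolate: there exists an intermediate bitstring $\tilde x\in\{0,1\}^L$ lying strictly ``between'' $x$ and $\xp$ (agreeing with both on the coordinates where they agree, and splitting the disagreement set), so that $\tilde x\ne x$, $\tilde x\ne\xp$, and every coordinate of $\tilde x$ equals that of $x$ or of $\xp$. Geometrically this says that any point of $\overline{\sigma(x)}\cap\overline{\sigma(\xp)}$ lies in the closure of several distinct $\sigma$-pieces simultaneously, hence on the boundary of more than one of the cutting surfaces $\tm(I_l)$.

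The key step is then to show $\overline{\sigma(x)}\cap\overline{\sigma(\xp)}$ is contained, up to a zero-area set, in an intersection $\tm(I_l)\cap\tm(I_{l'})$ for some $l\ne l'$ among the coordinates where $x$ and $\xp$ disagree. Concretely: since $x,\xp$ disagree in coordinates $l$ and $l'$ (say), we have $\sigma(x)\subseteq\sigma_{I_l}^{x_l}\cap\sigma_{I_{l'}}^{x_{l'}}$ and $\sigma(\xp)\subseteq\sigma_{I_l}^{\xp_l}\cap\sigma_{I_{l'}}^{\xp_{l'}}$ with $x_l\ne\xp_l$ and $x_{l'}\ne\xp_{l'}$; taking closures and intersecting forces $\overline{\sigma(x)}\cap\overline{\sigma(\xp)}\subseteq\overline{\sigma_{I_l}}\cap\overline{\sigma_{I_l}^\cplm}\cap\overline{\sigma_{I_{l'}}}\cap\overline{\sigma_{I_{l'}}^\cplm}=\tm(I_l)\cap\tm(I_{l'})$, using the definition of $\tm(I_l)$ as the codim-$1$ intersection of $\overline{\sigma_{I_l}}$ with its complement's closure. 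By the dimension count already established in the proof of Lemma~\ref{area lemma} (namely that $\tm(I_i)\cap\tm(I_j)$ is codimension-$2$ for $i\ne j$, barring the overlapping-surface case deferred to section~\ref{discuss}), this intersection contributes zero area, so $\text{Area}[\mxx]=\text{Area}\big[\overline{\sigma(x)}\cap_{\text{codim-1}}\overline{\sigma(\xp)}\big]=0$.

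The main obstacle I anticipate is the bookkeeping around the codim-$1$ versus codim-$2$ distinction: one must be careful that $\mxx$ is defined with the codim-$1$ measure, so showing the set sits inside a codim-$2$ locus is exactly what kills its contribution, and this relies on the ``generic'' (non-overlapping) assumption on the $\tm(I_l)$ that the paper explicitly brackets off to section~\ref{discuss}. A secondary subtlety is handling the boundary pieces $R$ and $\p M'$ when taking closures, but these are lower-dimensional and can be absorbed into the same zero-area argument. I would also remark, as the excerpt's surrounding discussion suggests, that the converse direction (directly connected $\Rightarrow$ possibly nonzero area) is not claimed — the theorem is a one-way exclusion, which is all that is needed to prune the search for $f$.
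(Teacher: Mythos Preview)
Your argument is correct and in fact more direct than the paper's. The paper proves the theorem by going through the auxiliary objects $m_l(x)$ of Definition~\ref{mlx}: it observes that the fragments $m_l(x)$ exhaust the nonzero-area pieces of the cutting surfaces $\tm(I_l)$, and then invokes Lemma~\ref{area lemma} to identify each $m_l(x)$ (up to zero area) with an $\mxx$ for a directly connected pair; the conclusion is then drawn somewhat implicitly, by exhaustion. Your route bypasses $m_l(x)$ entirely: from $x_l\ne\xp_l$ and $x_{l'}\ne\xp_{l'}$ you pass directly to the containment $\overline{\sigma(x)}\cap\overline{\sigma(\xp)}\subseteq\overline{\sigma_{I_l}}\cap\overline{\sigma_{I_l}^\cplm}\cap\overline{\sigma_{I_{l'}}}\cap\overline{\sigma_{I_{l'}}^\cplm}$, which up to boundary pieces is $\tm(I_l)\cap\tm(I_{l'})$, and then quote the same codimension-$2$ fact used inside Lemma~\ref{area lemma}. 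This is cleaner and makes the logical dependence on the non-overlapping assumption (deferred to section~\ref{discuss}) more transparent. One minor remark: the interpolating bitstring $\tilde x$ in your first paragraph is never actually used in your argument and can be dropped; the second paragraph already contains the whole proof.
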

\begin{proof}
    According to Definition.~\ref{mlx}, only $m_l(x)$ can contribute nonzero area and
    we have $\text{Area}[\mxx]=\text{Area}[m_l(x)]$ if $\xx$ is directly connected according to
    Lemma.~\ref{area lemma}. So only when $\xx$ is directly connected can $\mxx$
    contribute nonzero area.
\end{proof}
Then we replace $m_l(x)$ with surface fragment $\mxx$. Since all fragments of
$\tm(I_l),l=1,\dots,L$ correspond to $\xx$, only when $d\xx=1$ might $\mxx$ have
nonzero contribution. We denote the set of all directly connected $\xx$ by $\mathbb{A}$,
which means that these surface fragments have nonzero area contribution. In other words, we have
\begin{equation}
    \mathbb{A}\coloneqq\{\xx:d\xx=1\text{ and }x,x^{\prime}\in \mathbb{B}\}.
\end{equation}

\begin{theorem}
    Direct connection is topological invariant under $\mathcal{T}$.
    \label{DC theorem}
\end{theorem}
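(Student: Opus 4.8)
The plan is first to pin down what ``topological invariant under $\mathcal{T}$'' should mean here. The relation of direct connection (Definition~\ref{connection}) is phrased purely in terms of the bitstring labels, and by \eqref{bulk pieces def} a label $x\in\{0,1\}^L$ only records, component by component, whether a bulk piece lies inside the generating open set $\sigma_{I_l}$ or in its complement. So the claim to be established is that the set $\mathbb{B}$ of realised labels, together with the edge relation ``$x,x^\prime$ differ in exactly one component'', is a function of the topology $\mathcal{T}$ alone; equivalently, that it is preserved by every homeomorphism $\phi\colon M^\prime\to M^\prime$ carrying the generating family $\{\sigma_{I_1},\dots,\sigma_{I_L}\}$ to itself, say $\phi(\sigma_{I_l})=\sigma_{I_{\pi(l)}}$ for some permutation $\pi$ of $[L]$.

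Second, I would track how such a $\phi$ acts on labels. Since $\phi$ is a homeomorphism it sends open sets to open sets and complements to complements, so $\phi(\sigma_{I_l}^{b})=\sigma_{I_{\pi(l)}}^{b}$ for $b\in\{0,1\}$; intersecting over $l$ and using \eqref{bulk pieces def} gives $\phi(\sigma(x))=\sigma(x^\pi)$, where $x^\pi$ denotes the bitstring with $(x^\pi)_{\pi(l)}=x_l$. Hence $\sigma(x)\neq\emptyset$ iff $\sigma(x^\pi)\neq\emptyset$, so $x\mapsto x^\pi$ is a bijection of $\mathbb{B}$ onto itself — this is consistent with Theorem~\ref{gx theorem} and Lemma~\ref{intersection lemma}, which already say that emptiness of $\sigma(x)$, and hence $\mathbb{B}$, is controlled by the combinatorial map $g$ and not by the metric. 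Finally, a coordinate permutation does not change the number of positions in which two strings disagree, so $x,x^\prime$ differ in exactly one component iff $x^\pi,(x^\prime)^\pi$ do. Therefore $(x,x^\prime)$ is directly connected iff $(\phi(\sigma(x)),\phi(\sigma(x^\prime)))$ is, which is the asserted invariance; and since by Theorem~\ref{nonzero area theorem} the set $\mathbb{A}$ of directly connected pairs is precisely the set of fragments $\mxx$ with nonzero area, the same argument shows $\mathbb{A}$ is topological even though individual fragment areas are metric quantities.

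I expect the only real obstacle to be the first, definitional, step: a generic self-homeomorphism of $(M^\prime,\mathcal{T})$ need not permute the generators $\sigma_{I_l}$, so ``topological invariant under $\mathcal{T}$'' must be read either as invariance under the homeomorphisms compatible with the chosen generating family — the natural class, since $\mathcal{T}$ enters the whole cut-$f$-paste construction only through that family, and the relabeling freedom $I_l\leftrightarrow[n]\setminus I_l$ noted after Theorem~\ref{gx theorem} is the only genuine ambiguity — or simply as the bare statement that $\mathbb{B}$ and its hypercube edge relation are determined by $\mathcal{T}$ and the labelling. In either reading, once the generators are tracked the remaining steps are the short bookkeeping above, and the one place metric data could have entered, namely deciding which $\sigma(x)$ vanish, has already been reduced to combinatorics by Theorem~\ref{gx theorem}.
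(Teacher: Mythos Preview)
Your argument is sound under the interpretation you adopt, but the paper takes a genuinely different route. It does not invoke self-homeomorphisms at all; instead it observes that a directly connected pair $(x,x')$ differing only in the $l$-th component corresponds bijectively to the pair of open sets $(O_\alpha(x,l),O_\beta(x',l))\in\mathcal{T}\times\mathcal{T}$ from Definition~\ref{mlx}, once one identifies $(x,x')$ with $(x',x)$. The direct-connection relation is thus read off from the lattice $\mathcal{T}$ itself, and this is the sense of ``topological invariant'' intended: it immediately gives the application stated right after the proof, namely that moving the boundary regions $A_i$ around does not change the abstract structure of $\mathcal{T}$ and hence does not change $\mathbb{A}$. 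Your approach instead establishes a symmetry statement --- invariance under generator-permuting homeomorphisms --- which is cleaner and makes the word ``invariant'' precise, but to reach the paper's intended conclusion you would still need the extra step that a deformation of the $A_i$ induces such a homeomorphism (with $\pi=\mathrm{id}$); you correctly flag exactly this definitional issue in your final paragraph. One minor caveat: Theorem~\ref{gx theorem} gives only a \emph{sufficient} condition for $\sigma(x)=\emptyset$, so your phrase ``deciding which $\sigma(x)$ vanish has already been reduced to combinatorics'' slightly overstates matters; what is combinatorial is the superset $\mathbb{B}$, and since Definition~\ref{connection} is formulated on $\mathbb{B}$ rather than on the exact set of nonempty pieces, your conclusion is unaffected.
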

\begin{proof}

We find that all $\xx$ satisfying $\forall i\in[L]\setminus\{l\}, x^{\prime}_i=x_i$ and $x^{\prime}_l\neq x_l$
will correspond to $(O_\a(x,l),O_\b(x^{\prime},l))$ according to the definition of $O_\a(x,l)$ and
$O_\b(x,l)$ in Definition \ref{mlx}. And since the definition of $O_\a(x,l)$ and $O_\b(x,l)$ is
irrelevant to $x_l$, we should treat $(O_\a(x,l),O_\b(x^{\prime},l))$ and
$(O_\a(x^{\prime},l),O_\b(x,l))$ as the same.

Then we can think that there is a map $\phi$ that $\phi(O_\a(x,l),O_\b(x^{\prime},l))=(x,x^{\prime})$.
We can easily verify that the map $\phi$ is well-defined and one-to-one if we treat $\xx$ and
$(x^{\prime},x)$ as the same. So directly connection is topological invariant under $\mathcal{T}$.
\end{proof}

Changing the systems' position on $\p M$
and occupied region does not change the structure of the topology $\mathcal{T}$.
So the direct connection of $\xx$ is irrelevant to systems' position and occupied region.

\begin{theorem}
    For all $x,x^{\prime}\in \mathbb{B}$, $\xx$ is connected.
    \label{C theorem}
\end{theorem}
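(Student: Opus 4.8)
The claim is exactly the assertion that the graph on vertex set $\mathbb{B}$ with edge set $\mathbb{A}$ — equivalently, the subgraph of the Hamming cube $\{0,1\}^L$ induced by $\mathbb{B}$ — is connected. Unwinding Definition~\ref{def gx}, $x\in\mathbb{B}$ means precisely that $I_l^{x_l}\cap I_{l'}^{x_{l'}}\neq\emptyset$ for all $l,l'$. The plan is to show that every $x\in\mathbb{B}$ can be joined, by a path inside $\mathbb{B}$ using only unit-distance steps, to one fixed vertex; the natural target is one of the ``atoms'' $x^{(i)}$ of \eqref{xi}. The first remark is that each $x^{(i)}$ itself lies in $\mathbb{B}$: by construction $i\in I_l^{x^{(i)}_l}$ for every $l$, hence $i\in I_l^{x^{(i)}_l}\cap I_{l'}^{x^{(i)}_{l'}}$ for all $l,l'$.

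The heart of the argument is a descent lemma: for any $x\in\mathbb{B}$ and any $i\in[n]$, $x$ is connected to $x^{(i)}$, proved by induction on the Hamming distance $d(x,x^{(i)})$. Let $D=\{l:x_l\neq x^{(i)}_l\}$; it suffices to produce, when $D\neq\emptyset$, some $l\in D$ whose flip towards $x^{(i)}$ (call the result $x^{[l]}$) still lies in $\mathbb{B}$, since then $x^{[l]}$ is directly connected to $x$, sits at distance $d-1$ from $x^{(i)}$, and the induction closes. Flipping bit $l$ only affects pairs $(l,c)$; for $c\notin D$ one has $x_c=x^{(i)}_c$, so $i$ lies in both $I_l^{x^{(i)}_l}$ and $I_c^{x_c}$ and the pair is safe, and for $c=l$ the single set $I_l^{x^{(i)}_l}$ is nonempty. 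Thus $x^{[l]}\in\mathbb{B}$ fails only if for some $c\in D\setminus\{l\}$ one has $I_l^{x^{(i)}_l}\cap I_c^{x_c}=\emptyset$; and since $c\in D$ forces $I_c^{x_c}$ to be the complement (in $[n]$) of $I_c^{x^{(i)}_c}$, this says exactly $I_l^{x^{(i)}_l}\subseteq I_c^{x^{(i)}_c}$. If no admissible $l$ existed, every $l\in D$ would admit $c(l)\in D\setminus\{l\}$ with $I_l^{x^{(i)}_l}\subseteq I_{c(l)}^{x^{(i)}_{c(l)}}$; iterating $c(\cdot)$ on the finite set $D$ yields a cycle $l_1\to l_2\to\dots\to l_k\to l_1$ along which all the sets $I_{l_j}^{x^{(i)}_{l_j}}$ coincide, producing two distinct indices $a\neq b$ with $I_a^{x^{(i)}_a}=I_b^{x^{(i)}_b}$, i.e. $I_a=I_b$ or $I_a=[n]\setminus I_b$. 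Both are impossible, so an admissible $l$ exists. The theorem follows at once: any $x,x'\in\mathbb{B}$ are joined through $x^{(1)}$.

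As intuition (and an alternative route) the same fact can be read off the bulk. Since $M'$ is connected and the walls $\bigcup_l\tilde{m}(I_l)$ dissect it into the chambers $\sigx$, a generic path between interior points of two nonempty chambers crosses the walls one at a time, hence visits a finite sequence of chambers in which consecutive members share a codimension-one face; by Theorem~\ref{nonzero area theorem} a codimension-one (positive-area) overlap forces the corresponding bitstrings to be directly connected, so the chamber-adjacency graph embeds into $(\mathbb{B},\mathbb{A})$ and connectedness of $M'$ transfers. (Here one uses Theorem~\ref{gx theorem} to know that only bitstrings of $\mathbb{B}$ ever label nonempty chambers.)

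The main obstacle — in either route — is the degenerate possibility $I_a=[n]\setminus I_b$, equivalently $\tilde{m}(I_a)=\tilde{m}(I_b)$: this genuinely breaks the descent, as already for $L=2$ with $I_1,I_2$ complementary one has $\mathbb{B}=\{01,10\}$ with $d(01,10)=2$, so $(\mathbb{B},\mathbb{A})$ is disconnected. It must therefore be excluded as a standing hypothesis, which is harmless since $\tilde{S}_I=\tilde{S}_{[n]\setminus I}$ means such a pair never needs to appear in \eqref{entropy inequality}, and the same non-degeneracy is already implicit in Theorem~\ref{nonzero area theorem} (whose proof fails precisely when two $\tilde{m}(I_l)$ overlap, the case deferred to section~\ref{discuss}). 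Apart from this, the only delicate point in the geometric version is the transversality/tameness of the wall complex needed to make ``a generic path crosses walls one at a time'' precise.
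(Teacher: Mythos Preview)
Your proposal is correct, and your primary argument is genuinely different from the paper's. The paper argues geometrically, essentially your third paragraph: pick a path in the (simply) connected bulk $M'$ between $\sigma(x)$ and $\sigma(x')$ and record the single bit that flips at each wall-crossing of some $\tm(I_l)$. Your main argument is instead purely combinatorial: a Hamming-distance descent to the atom $x^{(i)}$, using only the set-theoretic description of $\mathbb{B}$ drawn from Definition~\ref{def gx}. The combinatorial route buys two things. First, it proves connectedness for \emph{all} of $\mathbb{B}$, whereas the geometric argument only directly treats those $x$ with $\sigma(x)\neq\emptyset$; since Theorem~\ref{gx theorem} gives only the implication ``$g(x)\neq\{1\}^{L^2}\Rightarrow\sigma(x)=\emptyset$'', some $x\in\mathbb{B}$ may label empty pieces (as the paper itself acknowledges in its conclusion), and for those the bulk-path argument has nothing to grab onto. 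Second, your descent isolates cleanly the one genuine obstruction, $I_a=[n]\setminus I_b$, and exhibits the $L=2$ counterexample showing the statement is literally false without excluding it; the paper's proof fails at the same point (a crossing of the common surface $\tm(I_a)=\tm(I_b)$ flips two bits at once) but does not flag the issue outside the overlap discussion of section~\ref{discuss}. The trade-off is that the paper's argument is very short and keeps the link to the bulk picture visible.
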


\begin{proof}
    For all $x,x^{\prime}\in \mathbb{B}$, since $M^{\prime}$ is simply connected, there must be a path in
    $M^{\prime}$ that connects $\sigma(x)$ and $\sigma(x^{\prime})$ passing through some $\tm(I_l)$.
    Each time when the path passes through a
    certain $\tm(I_l)$, the bitstrings $x$ will change and only change the component $x_l$
    once. As shown in Fig.~\ref{connected path}, each bitstring $x$ is associated with a superscript
    $i$ whose value is arranged in order. Each time when the path gets through $\tm(I_l)$, $x^i$
    changes to $x^{i+1}$, i.e., the value of $i$ increases by one. In this way we get $x^{i} (i=1,\cdots,n)$
    in proper order along the path and we let $x^1=x, x^n=x^{\prime}$.  Therefore, $(x^i,x^{i+1})
    (i=1,\cdots,n-1)$ is directly connected by definition. And thus,  according to Definition.~\ref{connection},
    $\xx$ is connected.
    \end{proof}

\begin{figure}[H]
    \centering
    \includegraphics[width=0.4\textwidth]{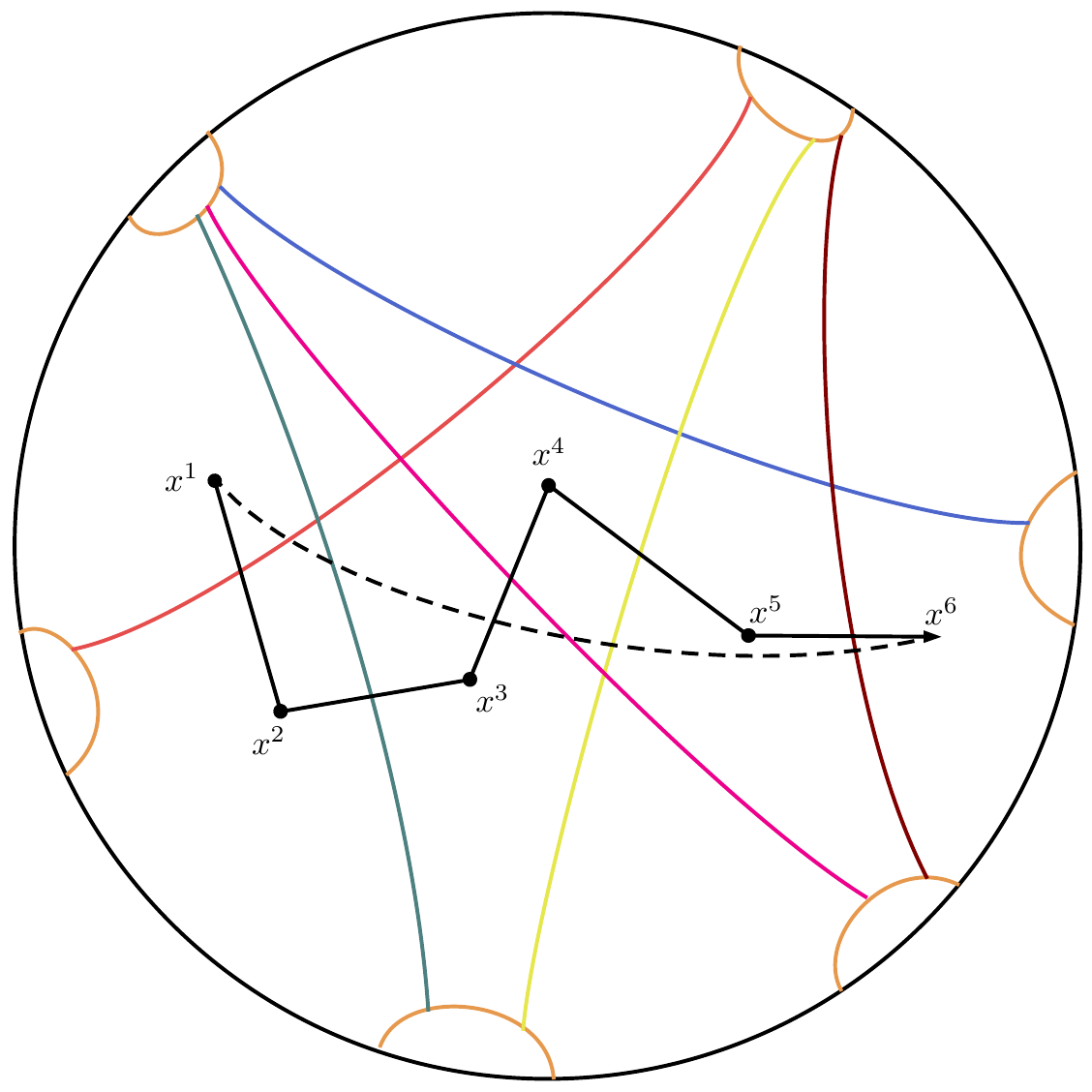}
    \caption{Polygonal line with arrow is the path that connect $\sigma(x^1)$ and $\sigma(x^6)$,
    and dotted line is the shortest (geometrical) path between the points of $\sigma(x^1)$ and $\sigma(x^6)$.
    We can easily find that $(x^i,x^{i+1})$ is directly connected. The polygonal (black bold) line is the
    corresponding shortest weight path, which will pass
    through the $\tm(I_l)$(color lines) when it moves from $\sigma(x^i)$ to $\sigma(x^{i+1})$.}
    \label{connected path}

\end{figure}


\subsection{Find all $f(x)$ for nonempty bulk pieces}\label{find f}
In previous subsections, we have shown how to exclude pairs with empty bulk pieces and surface fragments of zero-area
contribution. Although by this way one can exclude most surface fragments, the situations for finding $f$ are still huge.
Let us explain in the following. Before imposing the
zero-area-contribution constraint, the number of allowed choices for $f(x)$ is
$N(\mathbb{B})$ (same as the number of $x\in\mathbb{B}$),
where $N(\mathbb{B})$ represents the total number of elements in
$\mathbb{B}$. And the number of $f$'s inequality constraints is $N(\mathbb{A})$.
Even though the number of inequality constraints is small enough,
if we use an exhaustive method to find $f$, there is still $2^R$ possible $f(x)$ for each
uncertain $x$. The number of uncertain $x$ is $N(\mathbb{B})-n$, where $n$ is the number of $f(x^{(i)})$ which has been
fixed by the basic constraints \eqref{fxi}. So the number of possible $f$ is
$(2^R)^{N(\mathbb{B})-n}$. In addition, we should verify $N(\mathbb{A})$ times at most
for each possible $f$. Totally we need to verify $N(\mathbb{A})(2^R)^{N(\mathbb{B})-n}$ times
at most to find $f$. This is undoubtedly still huge. However, if we require that $f$ satisfying
$d_\b\fxx\leq d_\a\xx$ for all $x,\xp \in\mathbb{B}$\footnote{Note that this constraint
imposed on $f$ is weaker than the original one as imposed in \cite{Bao:2015bfa}, where,
instead of $x,\xp \in\mathbb{B}$, they require $f$ satisfying $d_\b\fxx\leq d_\a\xx$ for all
$x,\xp \in\{0,1\}^L$.}, it turns out that one can find the possible $f(x)$ for some certain $x$
programmatically.

Our algorithm includes two parts. Firstly, 
let us denote the set of all possible $f(x)$ as $\mathcal{F}^1(x)$, then $\mathcal{F}^1(\ix)=\{\fix\}$.
Define $\mathbb{B}_0\coloneqq\bigcup_{i\in[n]}\{\ix\}$ and let $\Bp=\mathbb{B}_0$.
See the detailed processes as follows:
\begin{itemize}
    \item First step. Choose $x^1\in\mathbb{B}\setminus\Bp$, then find all possible $f(x^1)\in\{0,1\}^R$
    satisfying $d_\a (x^1,\xp)\geq d_\b(f(x^1),f(\xp))$ for all $f(\xp)\in\mathcal{F}^1(\xp)$ of all
    $\xp\in\Bp$. Denote the set of all possible $f(x^1)$ as $\mathcal{F}^1(x^1)$, then add $x^1$
    into $\Bp$, i.e. resetting $\Bp=\{x^1\}\cup\Bp$.
    \item Second step. Choose $x^2\in\mathbb{B}\setminus\Bp$, then find all possible $f(x^2)\in\{0,1\}^R$
    satisfying $d_\a (x^2,\xp)\geq d_\b(f(x^2),f(\xp))$ for all $f(\xp)\in\mathcal{F}^1(\xp)$ of all
    $\xp\in\Bp$. Denote the set of all possible $f(x^2)$ as $\mathcal{F}^1(x^2)$, then add $x^2$
    into $\Bp$, i.e. resetting $\Bp=\{x^2\}\cup\Bp$.
    \item ......
    \item $m^{\text{th}}$ step. Choose $x^m\in\mathbb{B}\setminus\Bp$, then find all possible $f(x^m)\in\{0,1\}^R$
    satisfying $d_\a (x^m,\xp)\geq d_\b(f(x^m),f(\xp))$ for all $f(\xp)\in\mathcal{F}^1(\xp)$ of all
    $\xp\in\Bp$. Denote the set of all possible $f(x^m)$ as $\mathcal{F}^1(x^m)$, then add $x^m$
    into $\Bp$, i.e. resetting $\Bp=\{x^m\}\cup\Bp$.
\end{itemize}
See Figure.~\ref{MMI_graph} to help understanding the above steps intuitively.

If the number of possible $f(x)$ is large, we can narrow the range of $\mathcal{F}^1(x)$ by
repeating the above steps with slight changes. Because there already have been $\mathcal{F}^1(x)$
for every $x$, we should consider $\xp\in\mathbb{B}$. And $x\in\mathbb{B}\setminus\mathbb{B}_0$
is also already numbered and does not need to be specified separately. Take $\forall x\in\mathbb{B}$ and $\mathcal{F}^2(x)=\mathcal{F}^1(x)$, then see the processes as follows:
\begin{itemize}
  \item First step. Choose $x^1\in\mathbb{B}\setminus\mathbb{B}_0$, then find all possible
  $f(x^1)\in\{0,1\}^R$ satisfying $d_\a (x^1,\xp)\geq d_\b(f(x^1),f(\xp))$ for all $f(\xp)\in\mathcal{F}^2(\xp)$
  of all $\xp\in\mathbb{B}\setminus\{x^1\}$. Then replace $\mathcal{F}^2(x^1)$ with the set of
  all possible $f(x^1)$ we just get.
  \item Second step. Choose $x^2\in\mathbb{B}\setminus\mathbb{B}_0$, then find all
  possible $f(x^2)\in\{0,1\}^R$ satisfying $d_\a (x^2,\xp)\geq d_\b(f(x^2),f(\xp))$ for all
  $f(\xp)\in\mathcal{F}^2(\xp)$ of all $\xp\in\mathbb{B}\setminus\{x^2\}$. Then replace
  $\mathcal{F}^2(x^2)$ with the set of all possible $f(x^2)$ we just get.
  \item ......
  \item $m^{\text{th}}$ step. Choose $x^m\in\mathbb{B}\setminus\mathbb{B}_0$, then find all
  possible $f(x^m)\in\{0,1\}^R$ satisfying $d_\a (x^m,\xp)\geq d_\b(f(x^m),f(\xp))$ for all
  $f(\xp)\in\mathcal{F}^2(\xp)$ of all $\xp\in\mathbb{B}\setminus\{x^m\}$. Then replace
  $\mathcal{F}^2(x^m)$ with the set of all possible $f(x^m)$ we just get.
\end{itemize}
One can repeat this process to get $\mathcal{F}^n(x)$ for all $x\in\mathbb{B}$.
Since $\mathcal{F}^{n+1}(x)\subset\mathcal{F}^{n}(x)$, it means for $\forall x\in\mathbb{B}$, $\{\mathcal{F}^{n}(x)\}$ is a set sequence monotonically decreasing with $n$, where the number of elements of $\mathcal{F}^n(x)$ is finite. Then $\exists N, \forall n>N, \mathcal{F}^n(x)$ will no longer decrease as $n$ increases.
We denote $\lim_{n \to \infty} \mathcal{F}^n(x)$ as $\mathcal{F}(x)$. $\mathcal{F}(x)$ is
either a certain nonempty set or empty set.

\begin{figure}[H]
  \centering
  \subfigure[Step 0]{
  \includegraphics[width=0.4\textwidth]{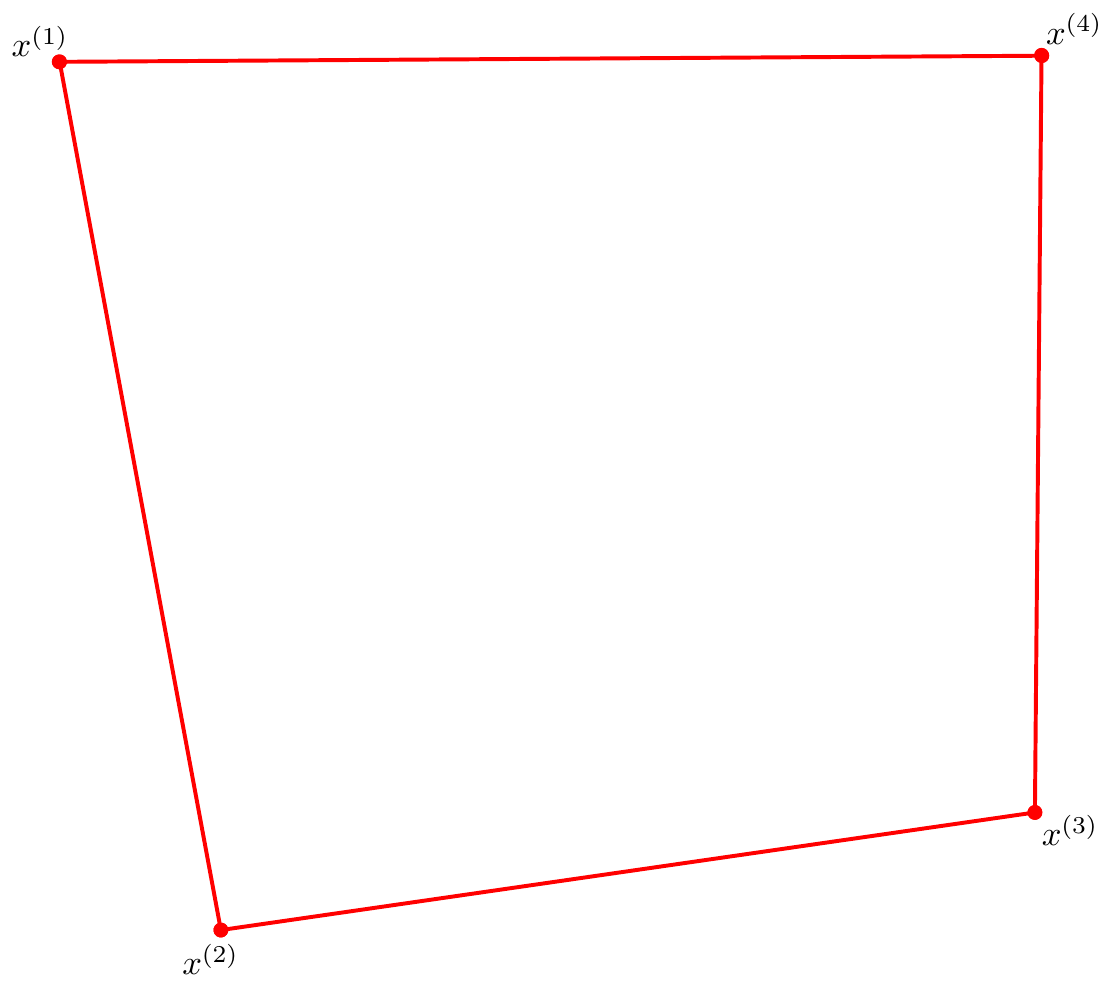}
  \label{MMI_graph_1}}
  \subfigure[Step 1]{
      \includegraphics[width=0.4\textwidth]{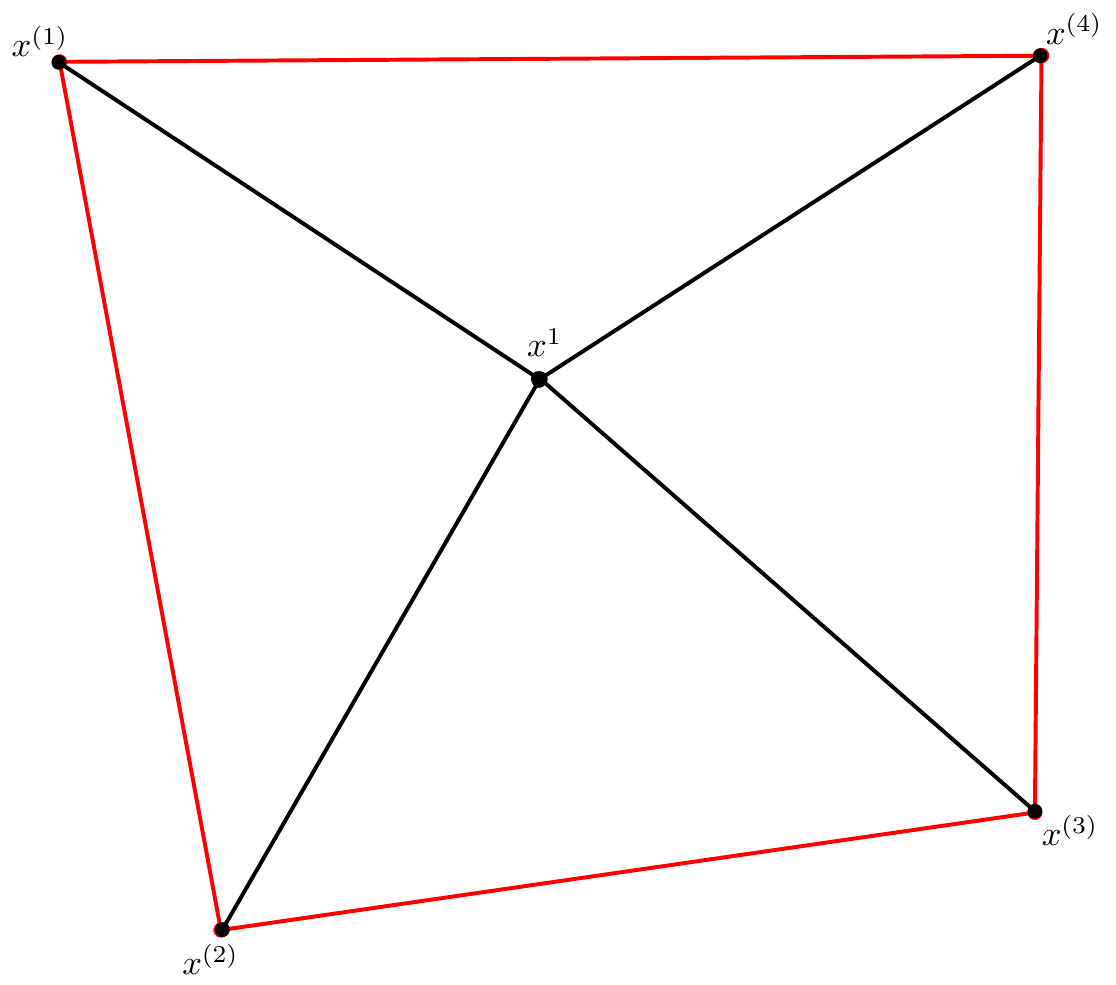}
      \label{MMI_graph_2}}\\
  \subfigure[Step 2]{
      \includegraphics[width=0.4\textwidth]{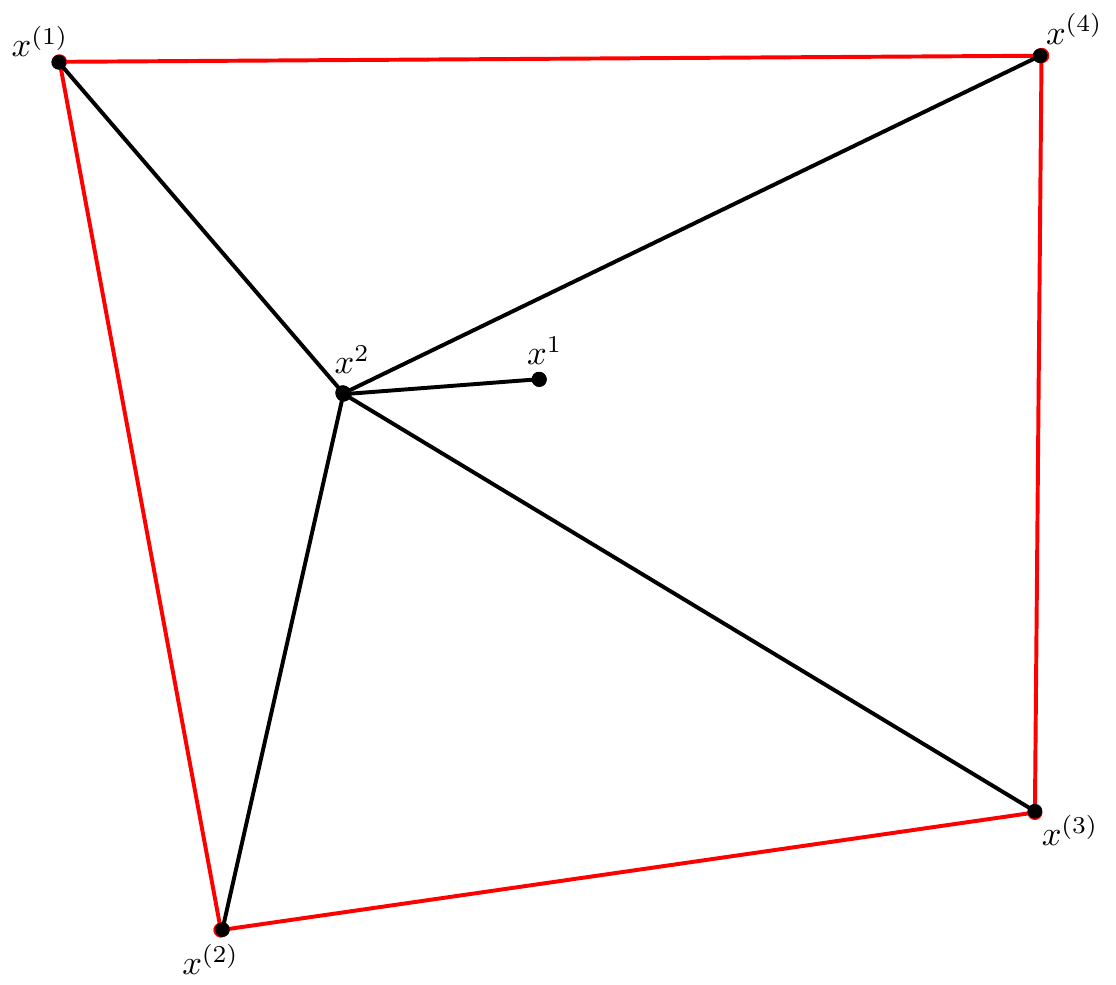}
      \label{MMI_graph_3}}
  \subfigure[Step m]{
      \includegraphics[width=0.4\textwidth]{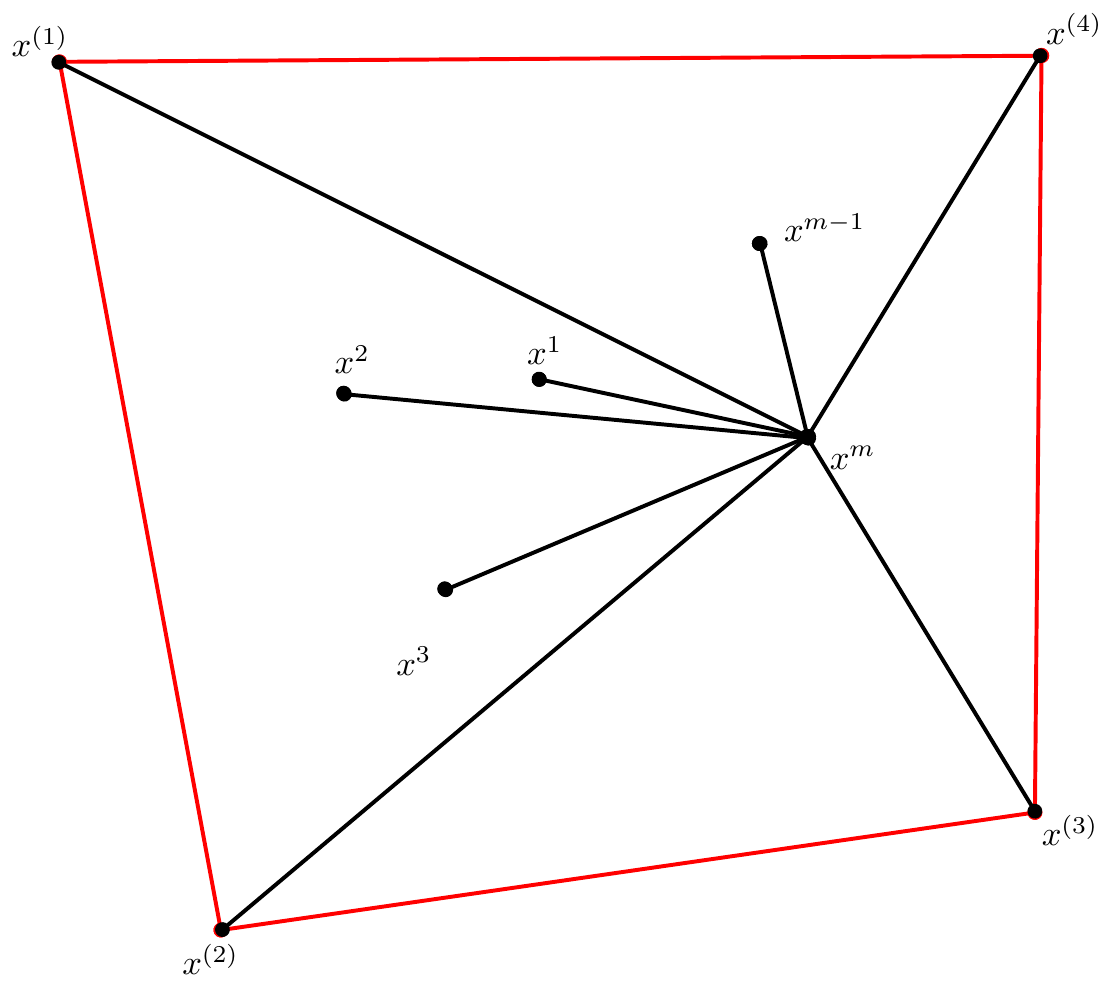}
      \label{MMI_graph_4}}
  \caption{There are fixed $f(x^{(i)})$ and uncertain $f(x^j)$.
  Step 1, in Fig(b), we add $x^1$ and connect it with all appeared point,
  each connection satisfying $d_\b\leq d_\a$(is short for $d_\b\fxx\leq d_\a\xx$).
  Every step is the same as step 1, we add a point and
  connect it with all appeared points, each connection satisfying $d_\b\leq d_\a$.}
  \label{MMI_graph}
\end{figure}

And if we first choose $x^j\in\mathbb{B}$ so that $d(x^j,x^{(i)})=1$, then the number of
possible $f(x^j)$ may be less than the choice of the random $x^j\in\mathbb{B}$. The time
complexity of the first part is of order $[N(\mathbb{B})]^2/2$ for the first process, while
$[N(\mathbb{B})]^2$ for the later process. If we repeat the process $a$ times in total, then the whole
time complexity of first part is of order $(2a-1)[N(\mathbb{B})]^2/2$.

Now let us turn to the second part. If for all $x\in\mathbb{B}$, we can find $f(x)\in\mathcal{F}(x)$
so that $f$ satisfies inequalities of weight distance constraints for all connected pairs $\xx$,
then this $f$ is what we want and thus we prove the inequality successfully. This is a global search
whose time complexity is of order $\prod_{x\in\mathbb{B}}N(\mathcal{F}(x))$
($N(\mathcal{F}(x))$ is the number of elements of $\mathcal{F}(x)$) which is huge.
However, it can imply whether the inequality is compact or not. If $\prod_{x\in\mathbb{B}}N(\mathcal{F}(x))$
is huge, then we might find a large number of possible $f$, which means the constraints of
weight distance inequality on $f$ could be tighter, i.e. the inequality that we prove might not be compact. So our method is still efficient for proving the compact inequality.

\subsection{Case with the overlapping relative homologous surface}\label{discuss}
In this subsection let us focus on a special case where some relative homologous surfaces
$\tm(I_l)$ and $\tm(I_{l^\prime})$  are overlapping, as depicted in
Fig.~\ref{examples about overlap a}. In this case, there is a caveat that the Lemma
\ref{area lemma} and Theorem \ref{nonzero area theorem} naively
seem to be no longer valid, since now the geometrically neighboring regions are no longer
directly connected as already mentioned in section \ref{4.2}. To rescue this case, our strategy is the
following: we first suppose there is a nonempty small region $\sigma(\epsilon)$ between
$\tm(I_l)$ and $\tm(I_{l^\prime})$ by taking an infinitesimal transformation on $\tm(I_l)$
and $\tm(I_{l^\prime})$ as shown in Fig.~\ref{examples about overlap b},
such that $\text{Area}[\tm(I_l)\cap\tm(I_l^\prime)]$ is equal to $0$, so all previous procedures and
theorems are valid. Then we can take a limit
$\sigma(\epsilon)\rightarrow \emptyset$. Because infinitesimal change of the area of $\tm(I_l)$
does not affect the method of finding $f$, the Lemma \ref{area lemma} and
Theorem \ref{nonzero area theorem} are robust under this limit.
Our method  of finding $f$ is still valid even for the case with the overlapping relative
homologous surface.

\begin{figure}[H]
  \centering
  \subfigure[]{
    \includegraphics[width=0.4\textwidth]{Directly_connected_1.pdf}
    \label{examples about overlap a}
    }
  \subfigure[]{
    \includegraphics[width=0.4\textwidth]{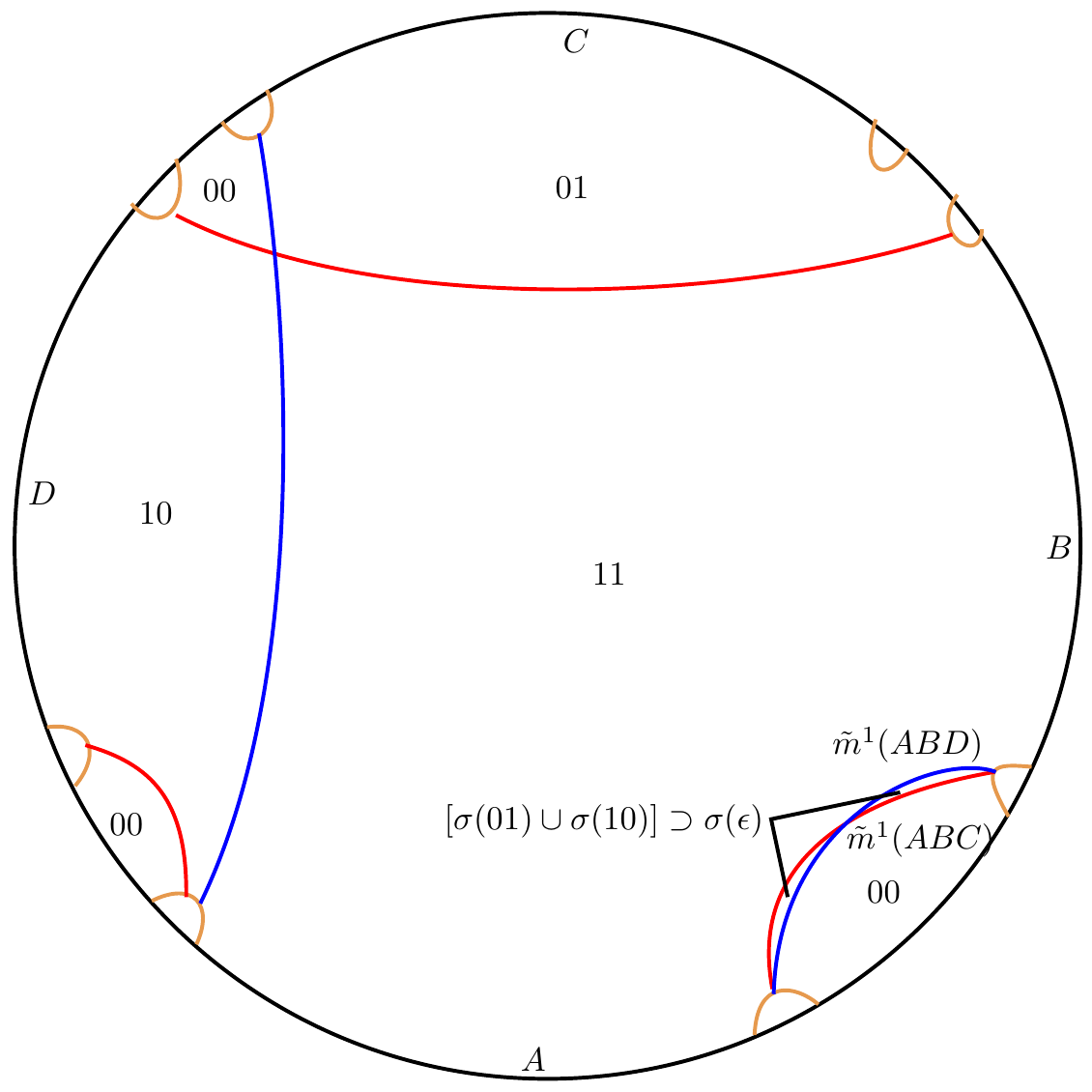}
    \label{examples about overlap b}
    }
  \caption{
  We take an infinitesimal transformation for $\tm(ABC)$ and $\tm(ABD)$ in figure (a),
  then we can get figure (b) so that $\sigma(\epsilon)\subset\sigma(01)\cup\sigma(10)$ and
  $\text{Area}[\tm(ABC)\cap\tm(ABD)]=0$.}
  \label{examples about overlap}
\end{figure}


\section{Examples}

By the definition of RHE, we expect the RHE possesses plentiful same inequalities as HEE. Related
work \cite{DSZ} also pointed out this from the ``flow" perspective. While we will assume some
inequalities of RHE and then prove these inequalities by using the above improved
proof-by-contraction method from the ``cut" perspective.

\subsection{Strong Subadditivity (SSA)}

We expect the SSA of RHE for a tripartite mixed state $\rho_{ABC}$, written as
\begin{equation}
    \ts_{AC}+\ts_{BC}\geq\ts_C+\ts_{ABC}.
\end{equation}
To prove this inequality, let us set
$A_1=A,A_2=B,A_3=C,A_4=O$,$I_1=\{1,3\},I_2=\{2,3\}$,$J_1=\{3\},J_2=\{1,2,3\}$,
$d_\a\xx=\sum_{l=1}^2 \abs*{x_l-x^{\prime}_l}$ and
$d_\b\fxx=\sum_{r=1}^2 \abs*{f(x)_r-f(x^{\prime})_r}$
Then basic constraints can be found in Table \ref{tab1}.
\begin{table}[H]
    \centering
    \begin{tabular}{l|cc}
        \toprule
    &$x^{(i)}$&$f(x^{(i)})$\\
    \hline
    1&10&01\\
    2&01&01\\
    3&11&11\\
    4&00&00\\
    \bottomrule
    \end{tabular}
    \caption{The basic constraints of $f(x)$ about SSA}\label{tab1}
\end{table}
Obviously all $x$ corresponds to a nonempty bulk piece, and all $f(x)$ are determined.
It is easy to verify that $d_\b\fxx\leq d_\a\xx$ for all directly connected $\xx$.
So the SSA of RHE is proved.


\subsection{Monogamy of mutual information (MMI)}\label{sect5.2}
We assume the MMI of RHE for a tripartite mixed state $\rho_{ABC}$, as
\begin{equation}
    \ts_{AB}+\ts_{AC}+\ts_{BC}\geq\ts_A+\ts_B+\ts_C+\ts_{ABC}.
\end{equation}

To prove this, we set:\\
$A_1=A,A_2=B,A_3=C,A_4=O$,\\
$I_1=\{1,2\},I_2=\{1,3\},I_3=\{2,3\}$,\\
$J_1=\{1\},J_2=\{2\},J_3=\{3\},J_4=\{1,2,3\}$,\\
$d_\a\xx=\sum_{l=1}^3 \abs*{x_l-x^{\prime}_l}$,
$d_\b\fxx=\sum_{r=1}^4 \abs*{f(x)_r-f(x^{\prime})_r}.$

Basic constraints can be found in Table \ref{tab2}.
\begin{table}[H]
    \centering
    \begin{tabular}{l|cc}
        \toprule
    &$x^{(i)}$&$f(x^{(i)})$\\
    \hline
    1&110&1001\\
    2&101&0101\\
    3&011&0011\\
    4&000&0000\\
    \bottomrule
    \end{tabular}
    \caption{The basic constraints of $f(x)$ about MMI}\label{tab2}
\end{table}
First, find all nonempty bulk pieces. We find other $x$ expect $x^{(i)}$ also corresponds to
nonempty bulk piece.We get
$\mathbb{B}\setminus\left(\bigcup_{i\in[4]}\{\ix\}\right)=\{100,010,001,111\}$.

Then, choose $x^1=100$, find all $f(x^1)$ satisfying $d_\b(\fix,f(x^1))\leq d_\a(\ix,x^1)$,
we get $\mathcal{F}(x^1)=\{0001\}$.\\
Choose $x^2=010$ find all $f(x^1)$ satisfying
$d_\b(\fix,f(x^2))\leq d_\a(\ix,x^2)$ and $d_\b(f(x),f(x_2))\leq d_\a(x,x^2),f(x)\in\mathcal{F}(x^1)$,
we get $\mathcal{F}(x^2)=\{0001\}$.\\
Choose $x^3=001$, find all $f(x^3)$ satisfying$d_\b(\fix,f(x^3))\leq d_\a(\ix,x^3)$ and
$d_\b(f(x),f(x^3))\leq d_\a(x,x^3),f(x)\in\bigcup_{j\in[2]}\mathcal{F}(x^j)$,
we get $\mathcal{F}(x^2)=\{0001\}$.\\
Choose $x^4=111$, find all $f(x^4)$ satisfying$d_\b(\fix,f(x^4))\leq d_\a(\ix,x^4)$ and
$d_\b(f(x),f(x^4))\leq d_\a(x,x^4),f(x)\in\bigcup_{j\in[3]}\mathcal{F}(x^j)$,
we get $\mathcal{F}(x^4)=\{0001\}$.
We have found all $f(x)$, and we can verify that $d_\b\leq d_\a$ for all directly connected
$\xx$, so we prove the MMI of RHE.

\subsection{Inequality of four regions}
The above two examples correspond to three regions. Actually, our approach is valid for more
regions. This is a concrete example for four regions. Before proceeding, let us recall that the inequality
of HEE was discovered by \cite{Bao:2015bfa} for a quadripartite
mixed state $\rho_{ABCD}$ as follows:
\begin{equation}
  I(A:B|C)+I(A:B|D)+I(C:D)\geq I(A:B)
\end{equation}
where $I(A:B)$ is the mutual information defined as $I(A:B)\coloneqq S_A+S_B-S_{AB}$ and $I(A:B|C)$ is the conditional mutual information defined as $I(A:B|C)\coloneqq S_{AC}+S_{BC}-S_C-S_{ABC}$.
Now we can transform the above formula of HEE into the following form in terms of RHE:
\begin{equation}
  \ts_{AB}+\ts_{AC}+\ts_{AD}+\ts_{BC}+\ts_{BD}
  \geq \ts_{A}+\ts_{B}+\ts_{CD}+\ts_{ABC}+\ts_{ABD}.
\end{equation}

To prove this, we set:\\
$A_1=A,A_2=B,A_3=C,A_4=D,A_5=O$,\\
$I_1=\{1,2\},I_2=\{1,3\},I_3=\{1,4\}, I_4=\{2,3\},I_5=\{2,4\}$,\\
$J_1=\{1\},J_2=\{2\},J_3=\{3,4\},J_4=\{1,2,3\},J_5=\{1,2,4\}$,\\
$d_\a\xx=\sum_{l=1}^5\abs*{x_l-x^{\prime}_l}$\\
$d_\b\fxx=\sum_{r=1}^5 \abs*{f(x)_r-f(x^{\prime})_r}.$

Basic constraints can be found in Table \ref{tab3}.
\begin{table}[H]
    \centering
    \begin{tabular}{l|cc}
        \toprule
    &$x^{(i)}$&$f(x^{(i)})$\\
    \hline
    1&11100&10011\\
    2&10011&01011\\
    3&01010&00110\\
    4&00101&00101\\
    5&00000&00000\\
    \bottomrule
    \end{tabular}
    \caption{The basic constraints of $f$ about the inequality of four regions}\label{tab3}
\end{table}
First, find all nonempty bulk pieces. We find other $x$ expect $x^{(i)}$ also corresponds to
nonempty bulk piece. Then for all $x^{j}\in\mathbb{B}\setminus\left(\bigcup_{i\in[5]}\{\ix\}\right)$,
we find possible $f(x^{j})$ by following the steps we've introduced in section \ref{find f}.
We program to calculate the result of $f(x)$ in Table \ref{tab4}.
\begin{table}[H]
  \centering
  \begin{tabular}{|c|c||c|c|}
  \hline
$x$&$f(x)$&$x$&$f(x)$\\
\hline
  11010&00010&00100&00001\\
  11000&10010&00011&00011\\
  10101&00111&00010&00010\\
  10100&00011&00001&00001\\
  10010&01010&\bf{00000}&\bf{00000}\\
  10001&00011&\bf{11100}&\bf{10011}\\
  10000&00010&\bf{10011}&\bf{01011}\\
  01100&00011&\bf{01010}&\bf{00110}\\
  01000&00010&\bf{00101}&\bf{00101}\\
\hline
  \end{tabular}
  \caption{$f(x)$ of all $x\in\mathbb{B}$, bold numbers are the basic constraints of $f$}\label{tab4}
\end{table}

We've already got all $f(x)$, and we can verify
that $d_\b\leq d_\a$ for all directly connected $\xx$, so we prove this inequality of RHE.


\subsection{Inequality of five regions}
An inequality of HEE that is discovered by \cite{HEC2}, then we expect the same inequality of RHE
for a 5-partite mixed state $\rho_{ABCDE}$ as follows:
\begin{equation}
  \ts_{ABC}+\ts_{ABD}+\ts_{ACE}+\ts_{BDE}+\ts_{CDE}
  \geq \ts_{AB}+\ts_{AC}+\ts_{BD}+\ts_{CE}+\ts_{DE}+\ts_{ABCDE}.
\end{equation}

To prove this, we set:\\
$A_1=A,A_2=B,A_3=C,A_4=D,A_5=E,A_6=O$,\\
$I_1=\{1,2,3\},I_2=\{1,2,4\},I_3=\{1,3,5\},I_4=\{2,4,5\},I_5=\{3,4,5\}$,\\
$J_1=\{1,2\},J_2=\{1,3\},J_3=\{2,4\},J_4=\{3,5\},J_5=\{4,5\},J_6=\{1,2,3,4,5\}$,\\
$d_\a\xx=\sum_{l=1}^5 \abs*{x_l-x^{\prime}_l}$,$d_\b\fxx=\sum_{r=1}^6 \abs*{f(x)_r-f(x^{\prime})_r}.$

Basic constraints can be found in Table \ref{tab5}.
\begin{table}[H]
    \centering
    \begin{tabular}{l|cc}
        \toprule
    &$x^{(i)}$&$f(x^{(i)})$\\
    \hline
    1&11100 &	110001\\
    2&11010  &	101001\\
    3&10101  &	010101\\
    4&01011  &	001011\\
    5&00111  & 	000111\\
    6&00000 &  000000\\
    \bottomrule
    \end{tabular}
    \caption{The basic constraints of $f$ about the inequality of five regions}\label{tab5}
\end{table}
First, find all nonempty bulk pieces. We find other $x$ expect $x^{(i)}$ also corresponds to
nonempty bulk piece. Then for all $x^{j}\in\mathbb{B}\setminus\left(\bigcup_{i\in[6]}\{\ix\}\right)$,
we find possible $f(x^{j})$ by following the steps we have introduced in section \ref{find f}.
We program to calculate the result of $f(x)$ in Table \ref{tab6}.

\begin{table}[H]
  \centering
  \begin{tabular}{|c|c||c|c||c|c||c|c|}
  \hline
$x$&$f(x)$&$x$&$f(x)$&$x$&$f(x)$&$x$&$f(x)$\\
  \hline
11111&000001&01111&000011&10111&000101&01000&000001\\
11110&100001&01110&000001&10110&000001&00110&000101\\
11101&010001&01101&000001&10100&010001&00101&000101\\
11011&001001&01100&100001&10011&000001&00100&000001\\
11001&000001&01010&001001&10010&100001&00011&000011\\
11000&100001&01001&001001&\bf{11100}&\bf{110001}&\bf{11010}&\bf{101001}\\
10000&000001&00001&000001&\bf{10101}&\bf{010101}&\bf{01011}&\bf{001011}\\
10001&010001&00010&000001&\bf{00111}&\bf{000111}&\bf{00000}&\bf{000000}\\
\hline
  \end{tabular}
  \caption{$f(x)$ of all $x\in\mathbb{B}$, bold numbers are the basic constraints of $f$}\label{tab6}
\end{table}
We've found $f$ for all $x\in\mathbb{B}$, we can verify
that $d_\b\leq d_\a$ for all directly connected $\xx$, so we prove the inequality.


\section{Conclusion and Discussion}

In this work, we systematically study the proof-by-contraction method, an important
technique developed for proving a given holographic entropy inequality so as to fix HEC
of a system.  This method includes a contraction map $f$, which plays a central role in this
method and is of particular difficulty to construct for more-region ($n \geq 4$) cases. Here
we develop, based on a pioneer work \cite{Bao:2015bfa,Akers:2021lms},  a general and effective rule to
rule out most of the cases such that $f$ can be obtained in a relatively simple way and can
be processed by computer programming. We achieve this by carefully investigating all possible
constraints imposed on $f$, the basic constraints and constraints of inequality of weight
distance, which is not fully developed in the original method. Compared with the original method proposed in  \cite{Bao:2015bfa}, our method has two obvious improvements.
Firstly, the time complexity of
first part of our algorithm is of order $[N(\mathbb{B})]^2$, which is less than the order $(2^L)^2$
of the algorithm in \cite{Bao:2015bfa}. The second part is a global search for $f$. Secondly, our algorithm is globally optimal rather
than locally optimal, implying that more possible $f$ can be found. And since we only consider
$x\in\mathbb{B}$ instead of $x\in\{0,1\}^L$, the constraints on $f$
is less than the greedy algorithm \cite{Bao:2015bfa}, thus again more possible $f$ can be found. So that our method can be used to prove some possible inequalities that the method in \cite{Bao:2015bfa} can not.
The validity of our method has been confirmed by several examples. Thus our method can be
used to confirm the authenticity of some unproved inequality. It also can shed light on finding
new inequalities of HEE.

On the other hand, we consider the concept of RHE in static slice by the notion of relative homology \cite{Headrick:2017ucz}, which is a generalization of holographic entanglement entropy that is suitable for characterizing the entanglement of mixed states.
The holographic entanglement entropy and entanglement wedge cross section
$E_W$ are two particular cases of RHE \cite{DCS,HHea}. In this paper, we extend the whole
framework of proof-by-contraction to RHE. We find that our method not only can be used to
prove the inequalities of HEE, but also can be used to prove the inequalities of $E_W$.

As a future direction, the completeness of our method is of great interest.  There are three aspects that will affect the completeness of the method. The first and the most important is, as mentioned in section \ref{Inequality of weight distance}, that the weight distance inequalities $d_\a(x,x^{\prime})\geq d_\b(f(x),f(x^{\prime}))$
for all bitstrings pairs encoded by the surface fragments with non-zero area contribution is simply a sufficient
condition for  inequality~\eqref{entropy inequality}. The necessity of the condition is still unproven partially because we cannot find, even for the MMI, the case where the area of a certain surface fragment becomes dominant among the area of all other surface fragments. Second, some
$x\in\mathbb{B}$ might correspond to empty bulk pieces, so there might exist unnecessary
constraints on $f$. Third, we require the weight distance inequality for $\xx:\forall x,\xp\in\mathbb{B}$,
so there also exist unnecessary constraints on $f$. We leave these for future work.

Another future direction worthy of mention is to generalize our method to prove the inequalities of RHE in the covariant case, as we focus on the static case in the paper. It was known that the SA, SSA and MMI of HEE can be proved to hold in the covariant case by Wall's ``maximin'' method with assumed null curvature condition \cite{EW2}, but this method encounters obstacles for the inequalities beyond the MMI \cite{Rota:2017ubr,HEC1}. So it is natural to think whether the RHE will encounter similar troubles. Unfortunately, as far as we know, the definition of the RHE in \cite{Headrick:2017ucz} was mainly set in the static case so far by using the max flow-min cut theorem, and the general covariant RHE remains to be studied. One may expect that the general covariant RHE can be realized with the help of analogous min flow-max cut theorem in the Lorentzian setting \cite{Headrick:2017ucz}, as done for the holographic complexity \cite{Pedraza:2021mkh,Pedraza:2021fgp}. However, the main difference between RHE and HEE is that, the RHE possesses a more general homology condition than HEE. We expect the RHE will encounter similar troubles like HEE in principle when we try to prove inequalities in the covariant case, as HEE itself can also be regarded as a special case of RHE. It remains to be further clarified in the future.

\acknowledgments
We would like to thank Wenhao Zhou for inspired discussion on finding contraction map $f$.
This work is partially supported by the National Natural Science Foundation of China with
Grant No. 11975116, and the Jiangxi Science Foundation for Distinguished Young Scientists
under Grant No. 20192BCB23007.

\end{document}